\theoremstyle{definition}
\newtheorem{definition}{Definition}
\theoremstyle{plain}
\newtheorem{theorem}{Theorem}
\newtheorem{lemma}{Lemma}
\theoremstyle{remark}
\newcommand{\proba}[1]{\mathbb{P}\left[#1\right]}
\newcommand{\E}[2][]{\mathbb{E}_{#1} \left[ #2 \right]}
\newcommand{\bigo}[1]{O\left( #1 \right)}
\begin{document}

%%
%% The "title" command has an optional parameter,
%% allowing the author to define a "short title" to be used in page headers.
\title{Communication-Efficient Publication of Sparse Vectors under Differential Privacy}

%%
%% The "author" command and its associated commands are used to define
%% the authors and their affiliations.
%% Of note is the shared affiliation of the first two authors, and the
%% "authornote" and "authornotemark" commands
%% used to denote shared contribution to the research.
\author{Quentin Hillebrand}
\orcid{0000-0002-7747-4998}
\affiliation{%
    \institution{The University of Tokyo}
    \city{Tokyo}
    \country{Japan}}
\email{quentin-hillebrand@g.ecc.u-tokyo.ac.jp}

\author{Vorapong Suppakitpaisarn}
\orcid{0000-0002-7020-395X}
\affiliation{%
    \institution{The University of Tokyo}
    \city{Tokyo}
    \country{Japan}}
\email{vorapong@is.s.u-tokyo.ac.jp}

\author{Tetsuo Shibuya}
\orcid{0000-0003-1514-5766}
\affiliation{%
    \institution{The University of Tokyo}
    \city{Tokyo}
    \country{Japan}}
\email{tshibuya@hgc.jp}

%%
%% By default, the full list of authors will be used in the page
%% headers. Often, this list is too long, and will overlap
%% other information printed in the page headers. This command allows
%% the author to define a more concise list
%% of authors' names for this purpose.
% \renewcommand{\shortauthors}{Hillebrand, Suppakitpaisarn, Shibuya}

%%
%% The abstract is a short summary of the work to be presented in the
%% article.
\begin{abstract}
In this work, we propose a differentially private algorithm for publishing matrices aggregated from sparse vectors. These matrices include social network adjacency matrices, user-item interaction matrices in recommendation systems, and single nucleotide polymorphisms (SNPs) in DNA data. Traditionally, differential privacy in vector collection relies on randomized response, but this approach incurs high communication costs. Specifically, for a matrix with $N$ users, $n$ columns, and $m$ nonzero elements, conventional methods require $\Omega(n \times N)$ communication, making them impractical for large-scale data. Our algorithm significantly reduces this cost to $O(\varepsilon m)$, where $\varepsilon$ is the privacy budget. Notably, this is even lower than the non-private case, which requires $\Omega(m \log n)$ communication. Moreover, as the privacy budget decreases, communication cost further reduces, enabling better privacy with improved efficiency. We theoretically prove that our method yields results identical to those of randomized response, and experimental evaluations confirm its effectiveness in terms of accuracy, communication efficiency, and computational complexity.
\end{abstract}

%%
%% The code below is generated by the tool at http://dl.acm.org/ccs.cfm.
%% Please copy and paste the code instead of the example below.
%%
\begin{CCSXML}
<ccs2012>
   <concept>
       <concept_id>10002978.10002991.10002995</concept_id>
       <concept_desc>Security and privacy~Privacy-preserving protocols</concept_desc>
       <concept_significance>500</concept_significance>
       </concept>
   <concept>
       <concept_id>10002978.10003018.10003019</concept_id>
       <concept_desc>Security and privacy~Data anonymization and sanitization</concept_desc>
       <concept_significance>500</concept_significance>
       </concept>
 </ccs2012>
\end{CCSXML}

\ccsdesc[500]{Security and privacy~Privacy-preserving protocols}
\ccsdesc[500]{Security and privacy~Data anonymization and sanitization}

%%
%% Keywords. The author(s) should pick words that accurately describe
%% the work being presented. Separate the keywords with commas.
\keywords{Differential privacy, Metric differential privacy, Communication constraint, Graph differential privacy}

\received{20 February 2007}
\received[revised]{12 March 2009}
\received[accepted]{5 June 2009}

\maketitle

\section{Introduction}

\textbf{Differential privacy} \cite{dwork2006calibrating} has emerged as a widely accepted standard for protecting sensitive information while enabling data analysis and sharing. By adding carefully calibrated noise to query results, it ensures that the release of analysis outcomes does not significantly alter an observer's knowledge of users' sensitive information.

To handle cases where data is dispersed across multiple users, various techniques have been developed to generate an obfuscated representation of the raw data \cite{erlingsson_rappor_2014}. 
One of the most well-known techniques is \textbf{randomized response}.
In this technique, each data point is probabilistically altered before being transmitted to the central server. The server then processes the received obfuscated data rather than the original values. This technique is widely recognized for its ability to satisfy local differential privacy \cite{kasiviswanathan_what_2011}, a variant of differential privacy that not only safeguards users' information when publishing analytical results but also ensures privacy during data transmission and storage on the central server.

When a user possesses multiple sensitive data points, a variant of differential privacy known as \textbf{metric differential privacy} \cite{andres2013geo} is commonly employed to enhance privacy protection. This privacy framework enables more accurate data analysis while maintaining user confidentiality. The randomized response technique, which perturbs each data point with a certain probability, also adheres to the requirements of this privacy notion.

We consider the scenario where each user's data is represented as a high-dimensional yet sparse vector—that is, while a user may have a large amount of information, most of the values are identical. For instance, in the context of \textbf{graph and social network data}, each user maintains a list of friends, which can be represented as an adjacency vector of size \( n \), where \( n \) is the total number of users in the system. If a user has only \( d \ll n \) friends, the adjacency vector contains just \( d \) ones, while the remaining entries, which are zeros (the trivial values in this case), make up nearly the entire vector. A similar situation arises in \textbf{recommendation systems}, where each user provides ratings for movies. Since users typically rate only a small subset of all available movies, the majority of entries in their rating vector remain as "N/A," indicating missing or unrated values. In genomic data, it is well established that the vast majority of \textbf{genetic information} is identical across all humans, with differences between individuals accounting for only about 1\% of the total genetic sequence.

In a non-private setting, collecting and storing sparse vectors is efficient because we only need to track the positions and values of non-trivial elements. Given a vector of size \( n \) with at most \( d \) non-trivial entries, the communication and storage cost is \( O(d \log n) \). However, when applying the randomized response mechanism, each entry is obfuscated with a certain probability, increasing the number of non-trivial elements to \( \Omega(n) \) after obfuscation. As a result, the communication and storage cost per user grows to \( \Omega(n) \) bits. 
For a system with \( N \) users, the overall communication and storage complexity becomes \( \Omega(n \times N) \) bits, which is impractical for many applications. 

For example, in a social network setting where $n = N$, this results in a communication and storage cost of \( \Omega(N^2) \) bits, making it impractical to process networks with more than approximately \( N = 50,000 \) nodes. Moreover, some algorithms \cite{imola2021locally, hillebrand2024cycle} require the central server to distribute the collected results back to all users, further increasing the overall communication cost to \( \Omega(N^3) \), which is prohibitively expensive.

Reducing communication costs has been a focus of many recent studies, including approaches such as rejection sampling \cite{feldman2021lossless} and importance sampling \cite{shah2022optimal}. 
However, the outputs of these techniques do not precisely match the distribution of the original mechanism.
This discrepancy can be problematic, as it may cause the mechanism to lose essential properties such as unbiasedness, making it more challenging to analyze and post-process. 

Several mechanisms provide unbiased results \cite{imola2022communication, hillebrand2023communication, liu2024universal}. However, the methods proposed in \cite{imola2022communication, hillebrand2023communication} are specifically designed for social network data and still require a communication cost of \( \Omega(N^2) \). While the mechanism in \cite{liu2024universal} is universally applicable, its use for high-dimensional vectors incurs a communication cost of \( \Omega(n \times N) \) and suffers from significant computational overhead. In particular, the computational cost grows exponentially with \( d \), making it impractical even when the number of non-trivial entries is below 10,000, as the computation time becomes prohibitively high.

\subsection{Our Contributions}

In this paper, we propose a communication- and computation-efficient mechanism for publishing the randomized response of high-dimensional sparse vectors. 

Our mechanism incorporates some ideas from the approach called Poisson Private Representation (PPR) in~\cite{liu2024universal}. However, to mitigate the prohibitive computational cost that arises when the number of non-trivial elements \( d \) is large, we introduce a random partitioning strategy. Specifically, we divide the high-dimensional vector into \( \Theta(d) \) chunks, ensuring that the expected number of non-trivial elements per chunk is $\Theta(1)$. This approach minimizes the likelihood of any chunk containing a disproportionately large number of non-trivial elements, which causes high computation cost in \cite{liu2024universal}. We then propose an algorithm to efficiently compress the randomized response results for vectors with a small number of non-trivial elements. The complete mechanism is presented in Section \ref{sec:crr}.

In \cite{liu2024universal}, the algorithm known as chunk PPR partitions the input vector into \( d \) chunks. However, this algorithm is designed for general inputs rather than specifically for sparse vectors. While the technique effectively reduces the computation time of PPR, our division method leverages the sparsity of the vector more efficiently.

In Section~\ref{sec:theory}, we also provide a theoretical analysis of our algorithm, demonstrating that both the communication cost and execution time are independent of the vector's length. Specifically, given a privacy budget \( \varepsilon \) and \( d \) non-zero elements in the vector, we show that the communication cost is bounded by \( \bigo{\varepsilon d} \), while the computational complexity remains \( \bigo{d} \). 

We emphasize that our communication cost depends only on the number of non-trivial elements, \( d \), rather than the size of the vector, \( n \). As a result, our cost is significantly lower than any previous algorithm designed for high-dimensional sparse vectors. Notably, our approach even outperforms the non-private case, which incurs a cost of \(\Omega(d \log n)\). 

Furthermore, since a smaller privacy budget corresponds to stronger privacy guarantees, our method achieves lower communication costs while ensuring better privacy protection. This contrasts with prior works on social network differential privacy \cite{imola2022communication,hillebrand2023communication}, where the communication cost increases as privacy improves. In those approaches, a smaller \(\varepsilon\) leads to a larger number of non-trivial elements in the obfuscated vector, thereby increasing the communication overhead.

We also emphasize that the expected computational cost of our algorithm scales linearly with the number of non-trivial elements, \( d \). This represents a significant improvement over the mechanism proposed in \cite{liu2024universal}, where the computation time grows exponentially with \( d \).

From the compressed data, each element of the randomized response result can be retrieved in constant time.

Some applications of our algorithm are listed below. We also give the details of these application in Section~\ref{sec:applications}.

\paragraph{Graph/Social Network Data} The private publication of social network adjacency lists is a fundamental tool in various graph privacy tasks. For instance, it serves as the first step in many synthetic graph generation frameworks; see \cite{liu2024pgb} for a benchmark.   Additionally, it plays a key role in subgraph counting, forming the initial phase of the general two-step mechanism \cite{imola2021locally}, which was originally designed for triangle counting but has since been extended to cycle counting \cite{hillebrand2024cycle} and common neighbor estimation \cite{he2024common}.  

Given that this framework requires transmitting complete adjacency data to each user, reducing communication overhead is a crucial consideration. Our method allows users to publish their adjacency lists under edge-local differential privacy \cite{qin2017generating}, with a communication cost proportional to their degree.  
\paragraph{Recommendation Systems} 
A comprehensive survey on private recommendation systems can be found in \cite{himeur2022latest}, highlighting the growing interest in this field.  
In \cite{berlioz2015applying}, the authors propose perturbing local user interaction ratings before submitting them to a central server that generates recommendations. However, their privacy model differs from ours, as it only protects the rating values while leaving their existence or absence unprotected.  
In \cite{gao2020dplcf}, randomized response is employed for local obfuscation before transmitting user-item interactions to the central server.

We enable the publication of user-item interaction matrices, where the communication cost scales with the number of ratings. 
\paragraph{Genomic Data} A common task in private genomics is the publication of single nucleotide polymorphisms (SNPs), which provide valuable insights into DNA variations and help assess genomic risk factors. Several methods have been developed for this purpose \cite{humbert2014reconciling, yilmaz2020preserving, yilmaz2022genomic}, though they rely on different privacy definitions than ours. In \cite{yamamoto2024privacy}, randomized response is employed to publish SNP data under the \(\varepsilon\)-DP privacy model.

Our approach supports the release of single nucleotide polymorphisms (SNPs), with a communication cost proportional to the number of locations where the least common variation is present.  

In Section~\ref{sec:experiments}, we present experimental results across all three application scenarios, highlighting the effectiveness and practicality of our mechanism. We verify that our communication and computation costs remain minimal even for vectors as large as 890,060 in size. Additionally, we confirm that a graph algorithm for triangle counting based on our framework achieves 100 times lower communication cost than any previous method, even in a small social network. With the same communication cost, our precision is $10^4$ better than any previous works.

\section{Preliminaries}
\label{sec:preliminaries}

\subsection{Sampling with and without Replacement}

%Sampling is a fundamental concept in probabilistic analysis, typically categorized into two main methods: sampling with replacement and sampling without replacement.  

In \textbf{sampling with replacement}, each draw is made independently from the entire population. If \( n \) draws are performed, each with a probability of success \( p \), the number of successes, denoted by \( X \), follows a binomial distribution \( \mathcal{B}(n, p) \). The probability of obtaining exactly \( k \) successes is given by 
$\mathbb{P}(X = k) = \binom{n}{k} p^k (1 - p)^{n - k}$.  

In contrast, \textbf{sampling without replacement} means that each draw is taken from the remaining unselected portion of the population. For a population of size \( N \) containing \( K \) successful elements, when drawing \( n \) times, the number of successes, denoted by \( Y \), follows a hypergeometric distribution \( \Hypergeometric(N, K, n) \). The probability of observing exactly \( k \) successes is given by   
$\mathbb{P}(Y = k) = \frac{\binom{K}{k} \binom{N - K}{n - k}}{\binom{N}{n}}$.

%Although analyzing sampling without replacement is generally more complex, under certain conditions, it can be effectively approximated by sampling with replacement.
We will use the following theorem in our analysis.
\begin{theorem}[Theorem 4 of \cite{hoeffding1963probability}]
    \label{thm:hypergeometric-inequality}
Let \( f \) be a continuous and convex function. If \( Y \sim \Hypergeometric(N, K, n) \) and \( X \sim \mathcal{B}(n, p) \) with \( p = K / N \), then  
$\mathbb{E}[f(X)] \leq \mathbb{E}[f(Y)]$.
\end{theorem}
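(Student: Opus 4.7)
My plan is to use Hoeffding's 1963 coupling technique together with Jensen's inequality. The template is to put $X$ and $Y$ on a common probability space and exhibit $X$ as a conditional expectation of $Y$ (with respect to some sub-$\sigma$-algebra $\mathcal{F}$); then convexity of $f$ gives $f(X) = f(\mathbb{E}[Y \mid \mathcal{F}]) \leq \mathbb{E}[f(Y) \mid \mathcal{F}]$ pointwise, and taking unconditional expectations on both sides yields $\mathbb{E}[f(X)] \leq \mathbb{E}[f(Y)]$.

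The first step is to construct the coupling. Working on a population of $N$ items with $K$ marked, I would realize both sampling schemes simultaneously: $n$ independent draws with replacement produce an indicator sum distributed as $X$, and a related $n$-tuple drawn without replacement produces an indicator sum distributed as $Y$. Using a symmetrization over permutations and an averaging argument, I would identify a filtration $\mathcal{F}$ such that the binomial count appears as the conditional mean of the hypergeometric count (rather than the other way around). Since $f$ is assumed only to be continuous and convex (no smoothness), I would avoid derivative-based arguments and rely entirely on the defining Jensen inequality for conditional expectations.

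Given that conditional-expectation identity, the rest of the proof is routine: apply $f$, invoke Jensen, and take outer expectations via the tower property to remove the conditioning; the inequality then reads off in the form stated.

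The main obstacle, and where I expect to spend almost all of the work, is producing the conditional-expectation identity in the direction dictated by the statement, i.e., with $X$ on the averaged side. The most familiar coupling in the literature runs the other way — presenting the without-replacement count as the conditional average of the with-replacement count — and would instead yield the reverse inequality. To recover the direction claimed here, the crux is to choose the auxiliary randomization (or the conditioning event) so that the binomial count is the averaged one. Once such a coupling is in place, convexity of $f$ and Jensen's inequality immediately deliver $\mathbb{E}[f(X)] \leq \mathbb{E}[f(Y)]$ as stated.
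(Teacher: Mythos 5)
There is a genuine gap, and it is precisely at the spot you flagged as the crux: the coupling you are hoping for does not exist, because the inequality as transcribed in the statement is actually reversed. Hoeffding's Theorem~4 asserts $\mathbb{E}[f(Y)] \leq \mathbb{E}[f(X)]$ for $Y$ the without-replacement (hypergeometric) count and $X$ the with-replacement (binomial) count — sampling without replacement is \emph{more} concentrated. A concrete counterexample to the statement as written: take $N = 2$, $K = 1$, $n = 2$, $f(x) = x^2$. Then $Y \equiv 1$ almost surely, so $\mathbb{E}[f(Y)] = 1$, while $X \sim \mathcal{B}(2, 1/2)$ gives $\mathbb{E}[f(X)] = \mathrm{Var}(X) + (\mathbb{E}[X])^2 = 3/2$, so $\mathbb{E}[f(X)] > \mathbb{E}[f(Y)]$. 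Your own variance sanity check would have caught this: if $X = \mathbb{E}[Y \mid \mathcal{F}]$ then $\mathrm{Var}(X) \leq \mathrm{Var}(Y)$ by the conditional Jensen inequality applied to $x \mapsto x^2$, but in fact $\mathrm{Var}(X) = np(1-p) > np(1-p)\tfrac{N-n}{N-1} = \mathrm{Var}(Y)$ whenever $n \geq 2$ and $0 < p < 1$. So no sub-$\sigma$-algebra can realize the binomial count as a conditional average of the hypergeometric one, and the search you planned to spend "almost all of the work" on is doomed.

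The paper offers no proof to compare against — this is quoted as an external result of Hoeffding — but note that in the proof of Theorem~\ref{thm:crr-complexity} the paper \emph{uses} the inequality in the correct direction (it bounds the expectation of a convex function of the hypergeometric variable $d_i$ by the expectation under the binomial), so only the displayed statement has $X$ and $Y$ swapped. The constructive fix for you is to keep your template but accept the direction it naturally produces: the standard coupling realizes the without-replacement sum as a conditional expectation of the with-replacement sum, i.e., $\sum_i X_i^{\mathrm{wor}} = \mathbb{E}\left[\sum_i X_i^{\mathrm{wr}} \,\middle|\, \mathcal{F}\right]$ for a suitable $\mathcal{F}$; Jensen plus the tower property then yields $\mathbb{E}[f(Y)] \leq \mathbb{E}[f(X)]$, which is both Hoeffding's actual theorem and the inequality the paper needs.
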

Additionally, the standard formula for the moment generating function of \( X \sim \mathcal{B}(n, p) \) is given as $\mathbb{E}\left[e^{tX}\right] = \left(1 - p + p e^t\right)^n$ (see Chapter 7.7 of \cite{ross1976first}).

\subsection{Counter-based Generators}

%Unlike sequential pseudo-random number generators (PRNGs), 
Counter-based pseudo-random number generators (PRNGs) allow for the parallel generation of pseudo-random number sequences.  
Rather than generating all numbers from index 0 to \( i-1 \) before obtaining the number at index \( i \), each index can be generated independently without extra computational cost.  
This property is formally stated in Theorem \ref{thm:counter-based}.

\begin{theorem}[Counter-based PRNGs \cite{salmon2011parallel}] 
    \label{thm:counter-based}  
    There exist PRNGs \( f \) such that, given a key \( \mathcal{K} \) and an index \( i \), the \(i\)-th element of the random sequence \( f(\mathcal{K}, i) \) can be generated in constant time.  
\end{theorem}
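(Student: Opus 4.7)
The plan is to prove Theorem~\ref{thm:counter-based} by explicit construction, which is essentially the approach taken in \cite{salmon2011parallel}. I would exhibit a concrete family of PRNGs of the stated form, argue that each evaluation runs in constant time in the word-RAM model, and then argue that the output sequence has the statistical properties expected of a pseudo-random sequence.

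First I would fix a keyed bijection $E_{\mathcal{K}} : \{0,1\}^w \to \{0,1\}^w$ built from a fixed, small number of rounds of a reversible primitive (for instance a Feistel network or an SP-network, as used in Threefry, Philox, or reduced-round AES). The candidate generator is then simply $f(\mathcal{K},i) := E_{\mathcal{K}}(i)$, where $i$ is interpreted as a $w$-bit counter. The proof of the constant-time claim is immediate from the construction: each round performs a fixed, $\mathcal{K}$-independent number of machine-word additions, XORs, multiplications, and bit rotations, and the number of rounds is a constant that does not depend on $i$. Thus the cost per index is a fixed number of word-RAM operations, giving $O(1)$ per query independently of $i$.

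Next I would argue the pseudo-randomness property. Since $E_{\mathcal{K}}$ is a keyed permutation, the family $\{f(\mathcal{K},\cdot)\}_{\mathcal{K}}$ realizes (heuristically, or provably against bounded adversaries for the ciphers underlying Threefry and Philox) a pseudo-random function on inputs $0,1,2,\dots$. Standard statistical test suites such as TestU01 BigCrush and PractRand, which the constructions of \cite{salmon2011parallel} are shown to pass, then certify usability as a PRNG for simulation purposes. For our application it is enough to invoke this as a black-box property; we never need cryptographic-strength guarantees, only that the outputs are indistinguishable from uniform for our downstream statistical analyses.

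The main obstacle, and the reason I would lean on the construction of \cite{salmon2011parallel} rather than build one from scratch, is reconciling \emph{constant time} with a rigorous computational model. On the one hand, the number of rounds in Threefry/Philox is indeed a fixed constant, so on the word-RAM model with $w$-bit words the per-index cost is $O(1)$. On the other hand, if one insists on the bit-complexity model or allows unbounded counter widths, then the constant becomes $O(w)$ and one must either fix the word size a priori or absorb a logarithmic factor in $i$. I would therefore state the theorem explicitly in the word-RAM model with machine word width at least $\log_2 i + O(1)$, which is the standard setting for this paper, and note that under this convention the construction above satisfies the conclusion of the theorem.
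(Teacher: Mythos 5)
The paper offers no proof of this statement: it is imported verbatim from \cite{salmon2011parallel} and explicitly treated as a black box ("we treat them as black-box mechanisms to simplify the presentation of our methods"). Your constructive sketch --- a keyed bijection $E_{\mathcal{K}}$ applied to the counter $i$, with a fixed number of rounds giving $O(1)$ word-RAM cost per index, and pseudo-randomness certified empirically via statistical test batteries --- is exactly the approach of the cited work, and your caveat about the word-RAM versus bit-complexity model is the right way to make the informal "constant time" claim precise.
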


There exist several efficient implementations of counter-based PRNGs (e.g., \cite{salmon2011parallel}), but in this article, we treat them as black-box mechanisms to simplify the presentation of our methods.  

To formalize our notation, let \(\mathcal{G}\) be an initialized counter-based PRNG. We denote by \(\mathcal{G}^{(k)}\) the state of the generator after \( k \) draws. Furthermore, for a given distribution \( Q \), we define \(\Gen(Q, \mathcal{G}^{(k)})\) as a sample drawn from \( Q \) using the randomness generated by \(\mathcal{G}\) at state \( \mathcal{G}^{(k)} \).

\subsection{Differential Privacy}

%We formally define the concept of metric differential privacy in the following definition.
\begin{definition}[Metric differential privacy \cite{andres2013geo}]
    Let $\mathsf{d}$ be a distance between datasets. A mechanism $\mathcal{M}$ satisfies $\varepsilon$-metric differential privacy if for any pair of datasets $D$ and $D'$, and for any possible outcome $S$ of $\mathcal{M}$, we have
        $\proba{\mathcal{M}(D) \in S} \leq e^{\varepsilon d(D, D')} \proba{\mathcal{M}(D') \in S}$.
\end{definition}
The parameter $\varepsilon$ is called the privacy budget of $\mathcal{M}$.

Metric differential privacy was initially introduced as a convenient way to define privacy in metric spaces. However, it has also proven to be a highly general framework, as both the classic notions of local and central differential privacy can be seen as special cases of metric differential privacy under the appropriate choice of distance.

In this article, we consider an input space of \( n \)-dimensional vectors and define privacy using the Hamming distance \( \mathcal{H} \). Specifically, for two vectors \( v \) and \( v' \), the distance \( \mathcal{H}(v, v') \) is equal to the number of coordinates on which they differ.

The algorithm we will analyze in this article is randomized response, a method used to privately publish an entire vector of categorical data.
\begin{definition}[Randomized response \cite{wang2016using}]
For \( \varepsilon > 0 \) and a vector \( v \) where each element of \( v \), denoted by $v_1, \dots, v_n$ belongs to \( \{0, \ldots, k-1\} \), the randomized response mechanism \( \mathcal{R} \) with privacy budget \( \varepsilon \) is defined as follows. For any \( j \in \{0, \ldots, k-1\} \), the probability of reporting \( j \) instead of the true value \( v_i \) is given by:
\[
\mathbb{P}(\mathcal{R}(v_i) = j) =
\begin{cases}
\frac{e^{\varepsilon}}{e^{\varepsilon} + k - 1} & \text{if } v_i = j, \\
\frac{1}{e^{\varepsilon} + k - 1} & \text{otherwise}.
\end{cases}
\]
\end{definition}

With the Hamming distance, randomized response satisfies \( \varepsilon \)-metric differential privacy.

\subsection{Poisson Private Representation}

In \cite{liu2024universal}, the authors present a method called Poisson Private Representation (PPR) for converting any differentially private mechanism into a compressed version while preserving the original output distribution \( P \). The transformed algorithm ensures an identical distribution to the original mechanism while having a communication cost of \( \bigo{\varepsilon} \).

This method leverages the shared random number generator results between the user and the central server, represented by draws \((Z_i)_{i \in \mathbb{N}}\) from a candidate distribution \( Q \). The server should select \( Q \) to closely approximate the true output distribution \( P \) of the differentially private (DP) mechanism applied to the private data. While the server does not have access to the private data—and therefore cannot directly determine \( P \)—a practical approach is to use the output distribution of the same DP mechanism on arbitrary input data as an approximation.

With these shared draws, the user can transmit the index \( K \) corresponding to the selected draw, which serves as the mechanism's output. The central server can then retrieve the output by computing \( Z_K \). Notably, if counter-based PRNGs are used, this computation remains constant-time regardless of the value of \( K \). 

Theorem \ref{thm:ppr-privacy} establishes that this mechanism preserves both privacy and the original output distribution.
\begin{theorem}[Proposition 4.2 and Theorem 4.7 of \cite{liu2024universal}]
    \label{thm:ppr-privacy}
For an \(\varepsilon\)-metric differentially private mechanism \(\mathcal{M}\), the PPR simulation of \(\mathcal{M}\) with parameter \(\alpha > 1\) satisfies \(2\alpha\varepsilon\)-metric differential privacy while ensuring that its output follows the same distribution as that of \(\mathcal{M}\).
\end{theorem}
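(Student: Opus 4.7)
The plan is to prove the two assertions separately—that the compressed mechanism has the same output distribution as $\mathcal{M}$ and that it is $2\alpha\varepsilon$-metric differentially private—both resting on the Poisson functional representation underlying PPR. Write $P = \mathcal{M}(D)$ for the target distribution and $Q$ for the candidate distribution shared through the counter-based PRNG, and let $w(z) = dP/dQ(z)$ denote the importance weight. PPR attaches independent Exp(1) marks $T_i$ to each candidate $Z_i$ and picks the index $K = \argmin_i T_i / w(Z_i)$. The user transmits only $K$; the server then recomputes $Z_K$ deterministically from the shared seed via Theorem~\ref{thm:counter-based}, so no additional leakage arises from the reconstruction step.

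For the distribution claim, I would invoke the Poisson mapping theorem: the collection $\{(Z_i, T_i)\}_{i \in \N}$ is a Poisson point process on the product space with intensity $dQ \otimes d\lambda$, and the pushforward under $(z, t) \mapsto (z, t/w(z))$ is a Poisson process with intensity $dP \otimes d\lambda$. The point of this image process with smallest second coordinate is exactly $(Z_K, T_K / w(Z_K))$, and by the standard Poisson functional representation lemma its first coordinate is distributed as $P$. Hence $Z_K \sim P$, proving the distribution-preservation half of the theorem.

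For the privacy claim, fix two inputs $D, D'$ with $\mathsf{d}(D, D') = 1$ and write $P, P'$ for the corresponding target distributions with weights $w, w'$. Conditional on the shared candidates, the exponential-race structure yields $\proba{K = k \mid (Z_i)} = w(Z_k) / \sum_j w(Z_j)$, so the ratio of selection probabilities decomposes into a pointwise importance-weight ratio $w(z)/w'(z) = (dP/dP')(z)$—bounded by $e^{\varepsilon}$ by the $\varepsilon$-DP of $\mathcal{M}$—and a ratio of random normalizers. The first factor contributes $e^{\varepsilon}$; the second is controlled by an $\alpha$-Rényi-type argument which, for $\alpha > 1$, shows that the normalizer concentrates strongly enough to absorb its ratio into an additional $e^{(2\alpha - 1)\varepsilon}$ factor, yielding the advertised $2\alpha\varepsilon$-metric DP bound.

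The hard part will be the bound on the normalizer ratio in the privacy step: $\sum_j w(Z_j)$ is a random sum of potentially heavy-tailed importance weights, and a naïve union bound fails to deliver the pointwise (rather than in-expectation) guarantee required by metric differential privacy. The role of $\alpha > 1$ is precisely to temper atypically large weights; the cleanest route is to re-express the selection-probability ratio through the Rényi divergence of order $\alpha$ between $P$ and $P'$, apply the standard conversion from $\varepsilon$-DP to a bound on that divergence, and then lift the resulting expectation bound into the required pointwise bound. Once this technical ingredient is in hand, the announced $2\alpha\varepsilon$ privacy guarantee follows by combining it with the pointwise weight bound, and the overall theorem is a corollary of the two halves taken together.
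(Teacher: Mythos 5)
The paper does not prove this statement at all: it is imported verbatim as Proposition~4.2 and Theorem~4.7 of \cite{liu2024universal}, so there is no in-paper proof to compare against. Judged on its own terms, your reconstruction has a genuine gap, and it is concentrated exactly where you flag ``the hard part.''

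The mechanism you analyze is not PPR but the plain Poisson functional representation: you select $K = \argmin_i T_i / w(Z_i)$, and the parameter $\alpha$ never enters the definition of the mechanism. A bound of the form $2\alpha\varepsilon$ cannot then fall out of the analysis, which is why you are forced to posit a ``R\'enyi-type concentration'' step that manufactures $\alpha$ from nothing; metric DP requires a pointwise likelihood-ratio bound, and no concentration argument on the random normalizer can supply one (indeed the hard argmin is known to leak more than $O(\varepsilon)$ in general --- that failure is the entire motivation for PPR). A second technical problem is that the exponential-race identity $\proba{K=k \mid (Z_i)_i} = w(Z_k)/\sum_j w(Z_j)$ holds for finitely many i.i.d.\ exponential marks, not for the arrival times of a rate-one Poisson process over infinitely many candidates, where $\sum_j w(Z_j)$ diverges almost surely. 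The actual construction softens the selection: $K$ is drawn with probability proportional to $\bigl(T_i / w(Z_i)\bigr)^{-\alpha}$, a series that converges almost surely precisely when $\alpha > 1$. With that definition the privacy proof is deterministic and short: replacing $w$ by $w'$ changes each term of the numerator by at most a factor $e^{\alpha\varepsilon\,\mathsf{d}(D,D')}$ and the normalizer by at most the same factor, giving $e^{2\alpha\varepsilon\,\mathsf{d}(D,D')}$ with no appeal to R\'enyi divergence or tail bounds.

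Your distribution-preservation half is essentially sound and close to the real argument: by the mapping theorem, $\{(Z_i, T_i/w(Z_i))\}_i$ is a Poisson process with intensity $P \otimes \mathrm{Leb}$, so the transformed second coordinates form a rate-one Poisson process whose marks $Z_i$ are i.i.d.\ $P$ given those coordinates. The useful strengthening you are missing is that this immediately yields $Z_K \sim P$ for \emph{any} selection rule that is a function of the transformed coordinates alone --- which covers the $\alpha$-softened selection as well as your argmin, and is what lets the exactness claim and the privacy claim coexist in the same mechanism.
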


Two types of communication occur during the protocol. The first is dedicated to establishing shared randomness between the user and the central server. This can be achieved, for instance, by transmitting a public instantiation key or using a predefined mechanism. Since this step can be initiated by either party and incurs only a small, constant communication cost, we exclude it from our communication analysis.

The second type of communication, which is our primary focus, involves the local user transmitting the value of \( K \) to the central server. This integer can be efficiently encoded using Huffman coding \cite{huffman1952method,li2018strong}, resulting in an expected communication cost of at most \( \mathbb{E}[\log_2 K] + \log_2 (\mathbb{E}[\log_2 K] + 1) + 2 \) bits. Consequently, bounding the expected communication cost reduces to the problem of bounding \( \mathbb{E}[\log_2 K] \).

\begin{theorem}[Theorem 4.3 of \cite{liu2024universal}]
    \label{thm:ppr-communication}
For a mechanism with output distribution \( P \), a candidate distribution \( Q \), and a parameter \( \alpha > 1 \), the message \( K \) produced by PPR applied to \( P \) satisfies  
$\E{\log_2 K} \leq \mathsf{D}(P \| Q) + \frac{\log_2 3.56}{\min\{(\alpha -1)/2, 1\}}$,
where \( \mathsf{D}(P \| Q) \) denotes the KL-divergence between \( P \) and \( Q \).
\end{theorem}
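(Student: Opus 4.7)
The plan is to follow a Poisson-process argument in the spirit of the Poisson Functional Representation of Li and El Gamal, adapted to the $\alpha$-tilted selection rule employed by PPR. First I would recall the construction: one generates points $(T_i, Z_i)_{i \geq 1}$ where $T_1 < T_2 < \cdots$ are the arrival times of a rate-$1$ Poisson process on $[0,\infty)$, and $Z_i$ are drawn i.i.d.\ from $Q$ independently of the arrival times. The transmitted index $K$ is the minimiser of an objective of the form $T_i \cdot g_\alpha(Z_i)$, where $g_\alpha$ is calibrated from the Radon--Nikodym derivative $r(z) = \mathrm{d}P/\mathrm{d}Q(z)$ so that $Z_K$ has exactly the target distribution $P$.

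Next, I would split $\E{\log_2 K}$ by conditioning on the returned sample:
\[
\E{\log_2 K} \;=\; \E[z \sim P]{\E{\log_2 K \mid Z_K = z}}.
\]
By the marking and thinning properties of Poisson processes, conditional on $Z_K = z$ the random variable $K-1$ is dominated by a Poisson quantity whose mean scales multiplicatively with $r(z)$, up to the tilting exponent. A naive application of Jensen's inequality in the form $\E{\log_2 K} \leq \log_2 \E{K}$ already delivers a leading term of order $\E[z\sim P]{\log_2 r(z)} = \mathsf{D}(P \| Q)$, plus an additive overhead from the Poisson tail, but this bound is too lossy to recover the stated constant.

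To obtain the specific constant $\log_2(3.56)/\min\{(\alpha-1)/2,1\}$, I would instead decompose $\log_2 K = \log_2 c + \log_2(K/c)$ for a free tuning parameter $c > 0$, bound the second factor via a Chernoff-type moment estimate tailored to the tilted Poisson intensity, and then optimise over $c$. The constant $3.56$ should appear as the minimiser of the resulting expression (likely a closed form akin to $e^{4/e}$), while the factor $1/\min\{(\alpha-1)/2,1\}$ would emerge from a two-regime analysis: for moderate $\alpha$ the dominant term in the moment bound is of order $(\alpha-1)/2$, whereas for large $\alpha$ one recovers a universal constant that no longer benefits from extra tilting.

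The main obstacle will be correctly tracking the $\alpha$-dependence in the Poisson-process intensity after tilting and matching it against the two-regime constant. In the untilted case $\alpha=1$ the standard PFR calculation yields a clean additive term, but for $\alpha > 1$ one has to carefully separate two distinct $\alpha$-contributions: the tilting effectively replaces the candidate distribution by the tilted measure proportional to $P^\alpha Q^{1-\alpha}$, which alters the concentration properties of the Poisson arrivals, while the exponent inside the minimum-ratio rule independently rescales the ordering statistic. Showing that both contributions collapse into a single bound with denominator $\min\{(\alpha-1)/2, 1\}$ is the delicate step that drives the final constant.
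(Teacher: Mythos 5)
This statement is not proved in the paper at all: it is imported verbatim as Theorem~4.3 of \cite{liu2024universal}, so there is no in-paper argument to compare against. Judged on its own terms, your proposal correctly identifies the framework used in that source --- the Poisson functional representation with arrivals $T_1 < T_2 < \cdots$, i.i.d.\ proposals $Z_i \sim Q$, the decomposition $\E{\log_2 K} = \E[Z_K \sim P]{\E{\log_2 K \mid Z_K}}$, and Jensen applied to the \emph{conditional} count to extract $\mathsf{D}(P\|Q)$ --- but it stops short of a proof at every point where the actual work lies. The selection rule is left as ``an objective of the form $T_i \cdot g_\alpha(Z_i)$'' without pinning down $g_\alpha$; the key claim that, conditional on $Z_K = z$, the index $K-1$ is dominated by a Poisson count with mean proportional to $\mathrm{d}P/\mathrm{d}Q(z)$ is asserted rather than derived; and the constant $\log_2(3.56)/\min\{(\alpha-1)/2,1\}$ is explicitly guessed (``likely a closed form akin to $e^{4/e}$'' --- note $e^{4/e} \approx 4.36 \neq 3.56$). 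A plan whose quantitative endpoint is conjectured is not a proof of a theorem whose content is precisely that quantitative endpoint.

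Two points in the sketch would actively fail if executed as written. First, be careful with the order of Jensen: the unconditional form $\E{\log_2 K} \leq \log_2 \E{K}$ yields $\log_2 \E[z\sim P]{\tfrac{\mathrm{d}P}{\mathrm{d}Q}(z)}$, i.e.\ a $\chi^2$-type quantity that can be much larger than (or infinite when) $\mathsf{D}(P\|Q)$ (is finite); only the conditional version $\E{\log_2 K \mid Z_K = z} \leq \log_2 \E{K \mid Z_K = z}$, followed by the outer expectation of the logarithm, produces the KL divergence. Second, your account of the $\alpha$-tilting is internally inconsistent with Theorem~\ref{thm:ppr-privacy}: if the selection rule genuinely replaced the effective proposal by the tilted measure proportional to $P^\alpha Q^{1-\alpha}$, then $Z_K$ would follow that tilted law rather than $P$, contradicting the exactness guarantee you are relying on. The actual PPR construction is calibrated so that the exponent-$\alpha$ rule still returns $Z_K \sim P$ exactly while only the distribution of the \emph{index} is smoothed for privacy; reconciling these two roles of $\alpha$ is precisely the ``delicate step'' you defer, and without it neither the exactness nor the denominator $\min\{(\alpha-1)/2,1\}$ can be established.
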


Another important consideration is the computational cost of the mechanism. On the server side, assuming the use of counter-based PRNGs, this cost remains constant and minimal.  
On the user side, however, the mechanism requires evaluating \(\frac{\mathrm{d}P}{\mathrm{d}Q}(Z_i)\), which represents the ratio of the probability of obtaining \( Z_i \) from \( P \) to that from \( Q \), for a large number of points \( Z_i \) sampled from the probability distribution \( Q \). To optimize this process, \cite{liu2024universal} introduces a reparameterization trick that employs a heap-based algorithm, ensuring the number of draws is bounded by \( \bigo{\sup_z \frac{\mathrm{d}P}{\mathrm{d}Q}(z)} \). Given that computing the probability ratio \(\frac{\mathrm{d}P}{\mathrm{d}Q}(z)\) incurs a cost of \( c \), the overall runtime of the PPR algorithm is  
$\bigo{\sup_z \frac{\mathrm{d}P}{\mathrm{d}Q}(z) \cdot \left( c + \log \sup_z \frac{\mathrm{d}P}{\mathrm{d}Q}(z) \right)}$.

\section{Our Algorithm: Compression of Randomized Response}
\label{sec:crr}

%In this section, we explore how the PPR mechanism can be used to publish vectors via randomized response while preserving metric differential privacy. %We begin by demonstrating that a naive application of PPR yields an inefficient algorithm. Then, we follow by an introduction to more refined techniques that improve performance.

The PPR mechanism can simulate any differentially private mechanism while reducing communication costs. Since randomized response is a differentially private mechanism, PPR can be directly applied to it. However, directly using PPR on the output of the randomized response method presents several challenges.

\paragraph{Issue of PPR: Large Hamming Distance}
Since two distinct vectors of length \( n \) can have a Hamming distance of up to \( n \), the KL divergence between the candidate distribution and the actual distribution can reach \( n\varepsilon \). Given this and considering Theorem~\ref{thm:ppr-communication}, we can infer that the resulting communication cost scales as \( \bigo{n\varepsilon} \). Consequently, directly applying PPR to randomized response within this privacy model does not yield any improvement over naive randomized response.

The direct usage of PPR is challenging even when we focus on the common scenario where vectors are sparse or close to a reference. In this setting, the general structure of the vector is largely known before the user’s data is published, with only a small number of coordinates differing from the reference. However, since the indices of these differing coordinates are unknown, the entire vector must still be published.  

\paragraph{Our Idea 1: Selection of the Candidate Distribution} Let \( d \) denote the number of differing coordinates from the reference. We select candidate distribution $Q$ as the randomized response applied to the reference vector. For instance, when the vector \( v \in \{0,1\}^n \) to be published is sparse, with only \( d \) nonzero elements while the rest are zeros, the candidate distribution \( Q \) is chosen as the distribution of the randomized response applied to the all-zero vector. 

Under this choice, the KL divergence between the candidate distribution and the randomized response applied to the actual vector is reduced to \( d\varepsilon \). By carefully selecting the candidate distribution, we achieve a communication cost proportional to the vector’s sparsity, similar to the non-private case.

\paragraph{Issue of PPR: Computation Cost} Since the number of draws (denoted as \( \sup_z \frac{\mathrm{d}P}{\mathrm{d}Q}(z))  \) in Section \ref{sec:preliminaries}) increases exponentially with \( d\varepsilon \), the computational cost of PPR also scales exponentially when applied to the randomized response result (see Section 8 of \cite{liu2024universal}). This makes it impractical for reasonable values of \( d \). To reduce the number of required draws, we partition the adjacency vector into smaller groups. While the original PPR paper also employs chunking, their method cannot be directly applied here to ensure low computational cost. Their approach relies on contiguous chunks, but depending on the data structure, values may be concentrated in specific regions of the vector. 

\paragraph{Our Idea 2: Random Partitioning Strategy} To reduce the number of draws, we divides the vector into smaller chunks using random partitioning strategy, ensuring that each chunk has a small degree on average. PPR is then applied independently to each chunk. By carefully selecting the chunk size, we achieve a communication cost proportional to the degree while maintaining a computational cost that also scales with the degree.

\begin{algorithm}
    \caption{Encodes the compressed randomized response of a vector}
    \label{algo:encode-rr}
    
    \Fn{\EncodeRR}{
        \KwIn{1) A list of indices corresponding to the non-trivial entries of the input vector, denoted as $(x_1, \ldots, x_d)$; 2) the values of the input vector at these indexes $(v_1, \ldots, v_d)$; 3) The reference vector values at these indexes $(c_1, \ldots, c_d)$; 4) The number of chunks $m$; 5) A privacy budget $\varepsilon$; 6) A public random permutation function $\varphi$}
        \KwOut{The encoded result $(K_1, \dots, K_m)$}
$y_1, \ldots, y_d \gets \varphi(x_1), \ldots, \varphi(x_d)$\;
$s \gets \lceil n / m \rceil$\;
For each $i \in [1, d]$, compute the Euclidean division of $y_i$ by $s$, yielding $(q_i, r_i)$\;
\For{$j \gets 1$ \KwTo $m$}{
    $S_j \gets \{(r_i, u_i, c_i) \mid q_i = j\}$\;
    $K_j \gets \PPR(S_j, \varepsilon, \alpha)$\;
}
\Return $(K_1, \ldots, K_m)$
    }
\end{algorithm}

Our encoding algorithm is given in Algorithm \ref{algo:encode-rr}. It begins by randomly permuting the vector using the function $\varphi$, and divide it into \( m \) chunks. The non-trivial elements of those $m$ chunks are denoted by the set \( S_1, \dots, S_m \). Next, we apply the PPR method \cite{liu2024universal} to each chunk and return the resulting list as the output.

Our decoding algorithm is presented in Algorithm \ref{algo:decode-rr}. To determine the value of the input vector at index \( i \), we first compute the chunk number \( q \) and the position of \( i \) within the chunk, denoted by \( r \), using the public function \( \varphi \) and \( s \). We then perform the decoding using the same approach as PPR.

\begin{algorithm}
    \caption{Decodes the compressed randomized response of a vector}
    \label{algo:decode-rr}
    
    \Fn{\DecodeRR}{
        \KwIn{A list of compressed indexes $(K_1, \ldots, K_m)$, a list of distributions $(Q_1, \ldots, Q_n)$, the index $i$ that one wants to access}
        \KwOut{The value of the vector at index $i$}

        $j \gets \varphi(i)$\;
        Let $(q, r)$ be the result of the Euclidean division of $j$ by $s$\;
        \Return $\Gen(Q_i, \mathcal{G}^{(s K_q +r)})$
    }
\end{algorithm}

\paragraph{Our Idea 3: Efficient Calculation of $\frac{\mathrm{d}P}{\mathrm{d}Q}(Z_i)$} As noted in Section \ref{sec:preliminaries}, PPR requires multiple evaluations of \(\frac{\mathrm{d}P}{\mathrm{d}Q}(Z_i)\). A naive approach to computing \(\frac{\mathrm{d}P}{\mathrm{d}Q}(Z_i)\) involves generating the entire randomized vector and comparing its probability under both distributions. Even when Algorithm \ref{algo:encode-rr} reduces the vector size from \( n \) to \( s \), \( s \) typically remains of the same order as \( n \), making this method computationally inefficient and leading to significant computation time.

In Algorithm~\ref{algo:ratio}, we propose an efficient approach leverages the fact that this probability ratio depends only on the values of the draw at the indices where the private vector differs from the reference. 

Let $Z_i = (z_1, \dots, z_s)$ be a sparse vector, and $x_1, \dots, x_{d'}$ are the indices on which the input vector has non-trivial values for all $1 \leq j \leq d'$.
By the independence of the randomized response mechanism, we observe that \(\mathbb{P}_P[Z_i]/\mathbb{P}_Q[Z_i] = \prod_{j = 1}^{d'} \mathbb{P}_{P_j}[Z_{x_j}]/\mathbb{P}_{Q_j}[Z_{x_j}]\), where \( P_j \) represents the probability distribution of the randomized response result derived from the \( x_j \)-th element of the input vector, and \( Q_j \) corresponds to the one obtained from the reference vector. Since \( z_j \) can be generated independently and is not required for the calculation when \( j \notin \{x_1, \dots, x_{d'}\} \), we can omit \( z_j \) in such cases. Leveraging this observation, we can bypass generating the full vector and instead compute the ratio using only these \( d' \) specific coordinates. The calculation time $c$ is $O(d')$.

\begin{algorithm}
    \caption{Calculate the ratio $\mathrm{d}P / \mathrm{d}Q$ at a given state $Z_i$}
    \label{algo:ratio}

    \Fn{\ProbaRatio}{
  \KwIn{1) A list of indices corresponding to the non-trivial entries of the input vector, denoted as \( (x_1, \ldots, x_{d'}) \); 2) The values of the input vector at these indices, represented as \( (v_1, \ldots, v_{d'}) \); 3) The reference vector values at these indices, given by \( (c_1, \ldots, c_{d'}) \); 4) The value of $Z_i$ at these indices, given by $(z_1, \ldots, z_{d'})$; 5) A privacy budget \( \varepsilon \).}
        \KwOut{$\frac{\mathrm{d}P}{\mathrm{d}Q}(Z_i)$}

        $ratio \gets 1$\;
        \For{$i \gets 1$ \KwTo $d'$}{
            %$B \gets \Gen(Q_i, \mathcal{G}^{(nk+x_i)})$\;
            \lIf{$z_i = v_i$}{
                $ratio \gets ratio \times e^{\varepsilon}$
            }
            \lIf{$z_i = c_i$}{
                $ratio \gets ratio \times e^{-\varepsilon}$
            }
        }
        \Return $ratio$
    }
\end{algorithm}

\section{Theoretical Analysis}
\label{sec:theory}

First, we give the privacy result for our algorithm.

\begin{theorem}
    Algorithm \ref{algo:encode-rr} satisfies $2\alpha\varepsilon$-metric differential privacy.
\end{theorem}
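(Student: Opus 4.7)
The plan is to reduce the claim to the PPR privacy guarantee (Theorem on PPR privacy, applied chunk by chunk) and then combine the chunks via a parallel-composition argument tailored to metric differential privacy with the Hamming distance.

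First, I would note that the permutation $\varphi$ is public and applied identically to any two input vectors $v$ and $v'$, so it is a deterministic bijection on coordinates that preserves the Hamming distance: $\mathcal{H}(\varphi(v), \varphi(v')) = \mathcal{H}(v, v')$. Hence it suffices to analyze privacy on the permuted vectors. After permutation and Euclidean division by $s$, the coordinates are partitioned deterministically into the $m$ chunks. Writing $v^{(j)}$ for the restriction of $v$ to chunk $j$, the local Hamming distances decompose additively as $\sum_{j=1}^{m}\mathcal{H}(v^{(j)}, v'^{(j)}) = \mathcal{H}(v, v')$.

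Next, I would observe that on each chunk $j$, Algorithm~\ref{algo:encode-rr} runs \PPR{} independently, with input $S_j$ built solely from $v^{(j)}$ (together with the fixed reference $c^{(j)}$) and with independent randomness (since counter-based PRNGs indexed by distinct keys/states produce independent draws across chunks). The underlying mechanism being simulated by \PPR{} on chunk $j$ is exactly randomized response at budget $\varepsilon$ applied to $v^{(j)}$, which is $\varepsilon$-metric differentially private with respect to the Hamming distance on chunk $j$. Applying Theorem~\ref{thm:ppr-privacy} to each chunk gives that the output $K_j$ satisfies $2\alpha\varepsilon$-metric DP with respect to the chunk-wise Hamming distance.

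Finally, I would combine the chunks. Because the chunks are processed with independent randomness, the joint output distribution of $(K_1,\ldots,K_m)$ factorises across chunks, so for any target tuple $(k_1,\ldots,k_m)$,
\[
\frac{\mathbb{P}[(K_1,\ldots,K_m)=(k_1,\ldots,k_m)\mid v]}{\mathbb{P}[(K_1,\ldots,K_m)=(k_1,\ldots,k_m)\mid v']} \;=\; \prod_{j=1}^{m}\frac{\mathbb{P}[K_j=k_j\mid v^{(j)}]}{\mathbb{P}[K_j=k_j\mid v'^{(j)}]} \;\leq\; \prod_{j=1}^{m} e^{2\alpha\varepsilon\,\mathcal{H}(v^{(j)},v'^{(j)})} \;=\; e^{2\alpha\varepsilon\,\mathcal{H}(v,v')}.
\]
Extending this pointwise bound to arbitrary measurable sets $S$ by integration yields the claimed $2\alpha\varepsilon$-metric DP guarantee for Algorithm~\ref{algo:encode-rr}.

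The only delicate point, and the one I would spend the most care on, is the independence of the randomness across chunks: the proof above relies on the fact that the PPR draws used in chunk $j$ share no state with those used in chunk $j'\neq j$, otherwise the factorisation of the likelihood ratio and the chunk-wise application of Theorem~\ref{thm:ppr-privacy} would not be valid. This is immediate once we commit to seeding each chunk with a distinct counter/key in the shared PRNG, but it is the assumption that underlies the whole parallel-composition argument and should be stated explicitly.
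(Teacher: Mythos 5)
Your proof is correct and follows essentially the same route as the paper's: apply Theorem~\ref{thm:ppr-privacy} to each chunk's independent \PPR{} instance and then invoke parallel composition over the partition of coordinates. The paper states this in two sentences and cites parallel composition as a black box, whereas you usefully spell out the distance-preservation of the public permutation, the additive decomposition of the Hamming distance, and the likelihood-ratio factorisation that parallel composition rests on.
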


\begin{proof}
Each independent \PPR satisfies \( 2\alpha\varepsilon \)-metric differential privacy, as stated in Theorem~\ref{thm:ppr-privacy}. Since the mechanism partitions the indices into \( m \) groups, its overall privacy guarantee follows from the parallel composition property of differential privacy \cite{mcsherry2009privacy,manurangsi2022differentiallyprivatefairdivision}.
\end{proof}

The number of chunks, \( m \), is a tunable parameter that balances communication cost and execution time. Our analysis focuses on the case where \( m = \beta \varepsilon d \), with \( \beta \) as a parameter. We demonstrate that under this setting, the communication cost is of order \( \bigo{\varepsilon d} \), while the computation cost remains \( \bigo{d} \).

\begin{theorem}
    \label{thm:crr-communication}
    The communication cost of our algorithm is \(\bigo{\varepsilon d}\), where \( d \) represents the number of indices where the input vector differs from the reference vector.
\end{theorem}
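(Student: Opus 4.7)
The plan is to analyze the expected message length chunk by chunk using Theorem \ref{thm:ppr-communication}, then sum over the $m = \beta \varepsilon d$ chunks. First I would bound the KL divergence $\mathsf{D}(P_j \| Q_j)$ for each chunk $j$. Because randomized response is a product measure across coordinates and $Q_j$ is the randomized-response distribution of the reference restricted to chunk $j$, the divergence tensorizes: coordinates that agree with the reference contribute zero, while each disagreeing coordinate contributes $(p-q)\varepsilon$, where $p = e^\varepsilon/(e^\varepsilon + k - 1)$ and $q = 1/(e^\varepsilon + k - 1)$. Writing $d_j$ for the number of disagreements falling inside chunk $j$, I obtain $\mathsf{D}(P_j \| Q_j) = d_j (p-q)\varepsilon \leq d_j \varepsilon$, so Theorem \ref{thm:ppr-communication} gives $\mathbb{E}_{\text{PPR}}[\log_2 K_j \mid d_j] \leq d_j \varepsilon + C_\alpha$ for a constant $C_\alpha$ depending only on $\alpha$.

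Next I would average over the random permutation $\varphi$. Since $\varphi$ is uniform, the $d$ differing indices are distributed among chunks of size $s = \lceil n/m \rceil$ like balls into bins, giving $\mathbb{E}_\varphi[d_j] = ds/n \leq d/m + d/n = 1/(\beta \varepsilon) + d/n$. Chaining this with the previous bound yields $\mathbb{E}[\log_2 K_j] \leq 1/\beta + \varepsilon d/n + C_\alpha = O(1)$ for fixed $\alpha$ and $\beta$. Invoking the universal-integer (Huffman) coding bound recalled in Section \ref{sec:preliminaries}, the expected number of bits needed to transmit $K_j$ is at most $\mathbb{E}[\log_2 K_j] + \log_2(\mathbb{E}[\log_2 K_j] + 1) + 2$, which is $O(1)$ per chunk. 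Summing over the $m = \beta \varepsilon d$ chunks gives a total expected communication of $O(m) = \bigo{\varepsilon d}$, as claimed.

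The main obstacle is handling the two independent sources of randomness (the permutation $\varphi$ and the PPR mechanism) when applying the coding bound, since that bound is naturally stated for a fixed distribution of $K_j$, whereas here the distribution of $K_j$ itself depends on $\varphi$. The cleanest way around this is to apply the coding inequality to the joint law of $K_j$ over both sources of randomness, using concavity of $x \mapsto \log_2(x+1)$ via Jensen if needed to commute expectation and logarithm. A secondary subtlety is confirming that the constant $C_\alpha$ from Theorem \ref{thm:ppr-communication} does not secretly depend on the chunk size $s$; inspection of that theorem shows it depends only on $\alpha$. Everything else, namely the tensorization of KL, the linearity of $\mathbb{E}_\varphi[d_j]$, and the summation over $m$ chunks, is routine.
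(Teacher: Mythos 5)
Your proposal is correct and follows essentially the same route as the paper: apply Theorem~\ref{thm:ppr-communication} chunk by chunk, bound each $\mathsf{D}(P_j\|Q_j)$ by $\varepsilon d_j$ (the paper does this via the generic metric-DP bound of Lemma~\ref{lem:metric-kl} rather than your explicit tensorization of the randomized-response KL, but both yield the same estimate), and sum the Huffman-coded lengths over the $m = \beta\varepsilon d$ chunks. The one simplification you missed is that averaging over the permutation $\varphi$ is unnecessary for the communication bound: since $\sum_j d_j = d$ holds deterministically for every realization of $\varphi$, one has $\sum_{j=1}^{m}\left[\varepsilon d_j + O(1) + \log_2\left(\varepsilon d_j + O(1)\right)\right] = \bigo{\varepsilon d}$ pointwise, so the joint-law/Jensen subtlety you flag never arises.
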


\begin{lemma}
    \label{lem:metric-kl}
    For $\mathcal{M}$ a metric differential private mechanism, and $P$ and $Q$ two distributions resulting from $\mathcal{M}$ applied on two datasets at distance $d$, then 
        $\sup_z \frac{\mathrm{d}P}{\mathrm{d}Q}(z) \leq e^{\varepsilon d}$ and 
        $\mathsf{D}(P\|Q) \leq \varepsilon d$.
\end{lemma}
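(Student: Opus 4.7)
The plan is to deduce both inequalities directly from the defining property of $\varepsilon$-metric differential privacy, specialized to the two datasets whose distance is $d$. That property gives, for every measurable outcome set $S$, the set-level bound $P(S) \leq e^{\varepsilon d} Q(S)$. In particular, $Q(S) = 0$ forces $P(S) = 0$, so $P \ll Q$ and the Radon-Nikodym derivative $\mathrm{d}P/\mathrm{d}Q$ is well defined.

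For the first inequality, I would argue by contradiction using a standard measure-theoretic step. Suppose the set $A = \{z : \frac{\mathrm{d}P}{\mathrm{d}Q}(z) > e^{\varepsilon d}\}$ has positive $Q$-measure. Integrating the derivative over $A$ would give $P(A) = \int_A \frac{\mathrm{d}P}{\mathrm{d}Q}\,\mathrm{d}Q > e^{\varepsilon d} Q(A)$, contradicting the set-level bound applied to $A$. Hence $\frac{\mathrm{d}P}{\mathrm{d}Q} \leq e^{\varepsilon d}$ almost everywhere, which yields the claimed supremum bound (interpreted as an essential supremum, or literally as the supremum after discarding a null set).

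For the second inequality, I would substitute the pointwise bound just derived into the definition of KL divergence. Since $\log \frac{\mathrm{d}P}{\mathrm{d}Q}(z) \leq \varepsilon d$ holds $P$-almost everywhere, monotonicity of the integral gives $\mathsf{D}(P \| Q) = \int \log \frac{\mathrm{d}P}{\mathrm{d}Q}\,\mathrm{d}P \leq \int \varepsilon d\,\mathrm{d}P = \varepsilon d$, because $P$ is a probability measure.

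I do not foresee any real obstacle: the lemma is essentially a restatement of the privacy definition in two equivalent analytic forms, and the whole argument occupies only a few lines. The only mild subtlety is handling the Radon-Nikodym derivative carefully in the continuous setting, but as noted this is immediate from the set-level privacy bound; in the discrete case the argument collapses to a direct pointwise comparison of probability masses, with the same constants.
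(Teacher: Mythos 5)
Your proposal is correct and follows essentially the same route as the paper: both derive the pointwise (essential supremum) bound on $\frac{\mathrm{d}P}{\mathrm{d}Q}$ directly from the metric differential privacy guarantee applied to the two datasets at distance $d$, and then bound $\mathsf{D}(P\|Q)$ by taking the expectation of the logarithm of that ratio under $P$. The only difference is that you spell out the measure-theoretic step (deducing the almost-everywhere bound on the Radon--Nikodym derivative from the set-level inequality) that the paper asserts without elaboration.
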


\begin{proof}
Since \( P \) and \( Q \) are derived from the same metric differentially private mechanism applied to two datasets separated by a distance \( d \), it follows that  
$\frac{\mathrm{d}P}{\mathrm{d}Q}(z) \leq e^{\varepsilon d}$, for all $z$.
Using this result, we obtain  
$\mathsf{D}(P \| Q) = \mathbb{E}_{Z \sim P} \left[ \log \left( \frac{\mathrm{d}P}{\mathrm{d}Q}(Z) \right) \right] \leq \varepsilon d$.
\end{proof}

\begin{proof}[Proof of Theorem \ref{thm:crr-communication}]
    The random public permutation function can be obtained through shared common knowledge. Additionally we suppose that all users (including the central server) are aware of the reference vector. This leads us to only consider the sharing of $(K_1, \ldots, K_m)$ for the communication cost.

We define \( d_i = |S_i| \), ensuring that \( d_1 + \cdots + d_m = d \). Each \( d_i \) represents the distance from the reference point of the \(i\)-th vector.
    \PPR is applied independently to every chunk $S_i$.
    Thus, each $K_i$ verifies $\E{\log_2 K_i} \leq \mathsf{D}(P_i\|Q_i) + \frac{\log_2 3.56}{\min\{(\alpha -1)/2, 1\}}$ and $\mathsf{D}(P_i\|Q_i) \leq \varepsilon d_i$ using Lemma~\ref{lem:metric-kl}. This gives a total expected communication cost of
    $\sum_{i=1}^m \left[ \E{\log_2 K_i} + \log_2 (\E{\log_2 K_i} + 1) + 2 \right]        \\ \leq \sum_{i=1}^m \left[ \mathsf{D}(P_i\|Q_i) + O(1) + \log_2 \left(\mathsf{D}(P_i\|Q_i) + O(1) \right)\right]
        \\ \leq  \sum_{i=1}^m \left[\varepsilon d_i + O(1) + \log_2 \left(\varepsilon d_i + O(1) \right) \right] = \bigo{\varepsilon d}$.
\end{proof}

\begin{theorem}
    \label{thm:crr-complexity}
The computational cost of the compressed randomized response proposed in this work is \(\bigo{d}\), where \( d \) represents the number of indices where the input vector differs from the reference vector.
\end{theorem}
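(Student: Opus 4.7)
The plan is to bound the expected runtime by summing the per-chunk PPR costs and controlling each via the randomness of the permutation $\varphi$. From the PPR runtime recalled in Section~\ref{sec:preliminaries}, the cost of the $i$-th chunk is of order $U_i(c_i + \log U_i)$, where $U_i = \sup_z \frac{\mathrm{d}P_i}{\mathrm{d}Q_i}(z)$ and $c_i$ is the cost of one density-ratio evaluation. By Lemma~\ref{lem:metric-kl} we have $U_i \leq e^{\varepsilon d_i}$, where $d_i = |S_i|$ counts the entries of chunk $i$ on which the input and the reference differ, while Algorithm~\ref{algo:ratio} shows $c_i = \bigo{d_i}$. Combining these, the per-chunk cost is bounded by $\bigo{e^{\varepsilon d_i}(1 + d_i)}$, where the $+1$ absorbs both the $d_i = 0$ base case and the $\log U_i$ term.

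Next I would analyze the distribution of $d_i$. Because $\varphi$ is uniformly random, the $d$ differing indices are assigned to chunks of size $s = \lceil n/m \rceil$ as if drawn uniformly without replacement, so $d_i \sim \Hypergeometric(n,d,s)$. The map $x \mapsto (1+x)e^{\varepsilon x}$ is convex on $x \geq 0$, so Theorem~\ref{thm:hypergeometric-inequality} yields
\[
\mathbb{E}\bigl[(1+d_i)e^{\varepsilon d_i}\bigr] \leq \mathbb{E}\bigl[(1+X)e^{\varepsilon X}\bigr]
\]
for $X \sim \mathcal{B}(s, d/n)$, reducing the problem to the binomial MGF and its derivative.

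From $\mathbb{E}[e^{\varepsilon X}] = (1 + p(e^\varepsilon - 1))^s$ and $\mathbb{E}[X e^{\varepsilon X}] = s p e^\varepsilon (1 + p(e^\varepsilon - 1))^{s-1}$ with $p = d/n$, the choice $m = \beta \varepsilon d$ gives $sp = 1/(\beta \varepsilon)$, so $(1 + p(e^\varepsilon - 1))^s \leq \exp\bigl((e^\varepsilon - 1)/(\beta \varepsilon)\bigr) = \bigo{1}$ for a constant $\varepsilon$. Each chunk therefore contributes $\bigo{1}$ in expectation, and summing over the $m = \bigtheta{\varepsilon d}$ chunks yields an expected total of $\bigo{\varepsilon d} = \bigo{d}$.

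The main obstacle is the tension between the exponential factor $e^{\varepsilon d_i}$ inside the PPR cost and the linear target $\bigo{d}$: a single chunk that captured many differing entries could, in principle, blow the runtime up exponentially. The random partitioning is exactly what prevents this. The argument relies on sampling without replacement being dominated by sampling with replacement in convex-function expectation (Theorem~\ref{thm:hypergeometric-inequality}), which tames $\mathbb{E}[e^{\varepsilon d_i}]$ via the binomial MGF, provided $m$ is chosen proportional to $\varepsilon d$ so that the expected number of differing entries per chunk is $\bigtheta{1}$.
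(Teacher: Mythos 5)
Your proposal is correct and follows essentially the same route as the paper's proof: bound each chunk's PPR cost by a convex function of $d_i$, observe that the random permutation makes $d_i$ hypergeometric, dominate it by the binomial via Theorem~\ref{thm:hypergeometric-inequality}, and evaluate through the binomial moment generating function with $sp = 1/(\beta\varepsilon)$. Your only (welcome) refinement is carrying the $(1+d_i)$ factor so that the $\bigo{1}$ baseline cost of the $d_i=0$ chunks is accounted for explicitly, whereas the paper absorbs it implicitly into the $\bigo{m} = \bigo{\varepsilon d}$ total.
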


\begin{proof}
The primary contributor to computational cost is the runtime of \PPR for the various \( S_i \). Therefore, our analysis will focus on these computations.
For any \( i \in [1, m] \), we first observe that the complexity \( c_i \) of Algorithm~\ref{algo:ratio} applied to \( S_i \) is \(\Theta(d_i)\). Consequently, the expected computational cost \( \mathcal{C}_i \) of running PPR depends on \( d_i \) and is given by  
$\mathcal{C}_i(d_i) = \bigo{e^{\varepsilon d_i} (d_i + \varepsilon d_i)} = \bigo{d_i e^{\varepsilon d_i}}$.

We need to analyze the distribution of \( d_i \) and its impact on the expected value of \( \mathcal{C}_i \). The indices of the vector are randomly shuffled before being divided into \( m \) groups. In the worst-case scenario, the subset \( S_i \) forms a contiguous block of size \( \lceil n / m \rceil \). Under this scenario, \( d_i \) follows a hypergeometric distribution with parameters \( n, d, \lceil n / m \rceil \).
Furthermore, since \( \mathcal{C}_i \) is a convex function, we can apply Theorem~\ref{thm:hypergeometric-inequality}. This result establishes that the expected value of \( \mathcal{C}_i(d_i) \) is upper-bounded by the expected value of \( \mathcal{C}_i(Y) \), where \( Y \sim \mathcal{B}(N, p) \) with \( N = \lceil n / m \rceil \) and \( p = d / n \).

We define the function \( f(x) = x e^{\varepsilon x} \) and compute the expected value of \( f(Y) \). This allows us to bound the complexity of Algorithm~\ref{algo:encode-rr} by \( \bigo{m \cdot \mathbb{E}[f(Y)]} \), where  
$\mathbb{E}[f(Y)] = \sum_{i=0}^{N} i e^{\varepsilon i} p^i (1-p)^{N-i} \binom{N}{i}$.
This expression can be viewed as a function of \( e^{\varepsilon} \), which we denote as \( g \). Additionally, we introduce the function  
$h(x) = \sum_{i=0}^{N} x^i p^i (1-p)^{N-i} \binom{N}{i}$.
From the moment-generating function formula for the binomial distribution, we obtain  
$h(x) = (1 + (x-1) p)^N$.
Differentiating \( h(x) \), we find  
$h'(x) = pN(1 + (x-1) p)^{N-1}$.
Since \( g(x) \) satisfies the relation \( x h'(x) = g(x) \), we conclude that:
    \begin{align*}
        \E{f(Y)} & = e^{\varepsilon} pN \left(1 + (e^{\varepsilon}-1) p \right)^{N-1} \\
        & \leq \frac{d}{n} \left(\frac{n}{m} + 1\right) \left(1 + (e^{\varepsilon}-1) \frac{d}{n} \right)^{\left\lceil\frac{n}{m}\right\rceil - 1} \\
        & \leq \left(\frac{1}{\beta \varepsilon} + \frac{d}{n}\right) \exp \left[\frac{n}{m} \ln \left( 1 + (e^{\varepsilon}-1) \frac{d}{n} \right)\right] \\
        & \leq \left(\frac{1}{\beta \varepsilon} + \frac{d}{n}\right) \exp \left[\frac{n}{m} (e^{\varepsilon}-1) \frac{d}{n} \right] \\
        & = \left(\frac{1}{\beta \varepsilon} + \frac{d}{n}\right) \exp \left( \frac{e^{\varepsilon}-1}{\beta \varepsilon} \right) = \bigo{\frac{1}{\beta \varepsilon}}
    \end{align*}
The final step of the derivation follows from the fact that \(\exp \left( \frac{e^{\varepsilon}-1}{\beta \varepsilon} \right) = O(1)\) when \(\beta \geq 1\) and \(\varepsilon\) is close to zero. Since Algorithm~\ref{algo:encode-rr} executes a total of \( m = \beta \varepsilon d \) instances of \PPR, the overall computational complexity of the algorithm is given by \( \bigo{m \cdot \mathbb{E}[f(Y)]} \), which simplifies to \( \bigo{d} \).
\end{proof}

\section{Potential Applications}
\label{sec:applications}

In this section, we will explore various scenarios in which our algorithm can be applied. Randomized response serves as a fundamental component of differentially private algorithms. Consequently, our mechanism is particularly useful in private applications involving large volumes of data where communication costs are a concern. The only prerequisites are (1) the existence of a reference vector and (2) the adoption of metric differential privacy as the privacy framework. Regarding the first requirement, in most real-world scenarios, the server typically has some prior knowledge of the information held by the user. In the following discussion, we will examine examples of how this prior knowledge about the secret vector can be transformed into a reference vector.

\paragraph{Graph/Social Network Information} The first scenario we examine is the publication of adjacency lists. In this setting, users seek to privately share their list of neighbors in a graph. The primary framework used to protect such information is edge-local differential privacy \cite{qin2017generating}, which aligns with metric differential privacy by defining distance as the Hamming distance between two adjacency vectors. 
Since most real-world graphs' adjacency vectors are sparse, it is generally known in advance that the majority of bits in the adjacency vector will be zeros.  Therefore, we adopt a reference vector consisting entirely of zeros for our mechanism. According to Theorem~\ref{thm:crr-communication}, this choice leads to a communication cost proportional to \( d \), the degree of the node—typically around a thousand—rather than \( n \), the total number of nodes in the graph, which usually reaches several million, as would be required for a naive randomized response.

%Moreover, our approach outperforms even non-private methods, where each of the \( d \) neighbors requires bandwidth on the order of \( \log n \) for communication. This represents a significant contribution, as prior work on communication-efficient publication of adjacency lists has typically introduced substantial noise to mitigate communication costs. In contrast, our mechanism achieves this reduction without the same level of noise increase.

\paragraph{Recommendation System} In this scenario, each user holds a set of ratings for certain items. Those ratings forms a vector called user-item interaction. Specifically, for each item in the set of possible items, a user has either not provided a rating or has assigned a score from a finite set of values. Our goal is to publish these ratings while preserving metric differential privacy, where the distance between two vectors is defined as the number of items for which the ratings differ or are present in only one of the two vectors.

Since the total number of possible items, represented by the size of the user-item interaction vector \( n \), is typically very large, while each user has rated only a small subset, we use an empty rating vector (where no items have been rated) as the reference. Consequently, the number of non-trivial elements in the input vectors, denoted as \( d \), corresponds to the number of items a user has rated, which is significantly smaller than \( n \). This allows our algorithm to achieve a communication cost proportional to the number of rated items rather than the total number of possible items, substantially reducing overhead.

\paragraph{Genomic Information} We focus on the publication of single-nucleotide polymorphisms (SNPs), which represent variations of a single nucleotide in the genome. In this setting, both the central server and users have access to a list of possible SNP locations, along with the most common nucleotide variation at each location. The reference vector is constructed using these most frequent variations, while the input vector represents each user's specific SNP data. Consequently, the vector size \( n \) corresponds to the total number of locations, while \( d \) represents the number of locations where a user's genomic information differs from the most frequent variation. It is known that $d \ll n$ in this type of dataset. 

\section{Experimental Results}
\label{sec:experiments}

In this section, we conduct experiments on the three applications identified in the previous section. The code for these experiments is available in the following repository: \url{https://anonymous.4open.science/r/Metric-DP-Compression-8362}. All timed experiments are conducted on a MacBook Pro (14-inch, 2021) equipped with an M1 Pro chip featuring an 8-core CPU and 32GB of memory. We do not include a comparison with PPR or chunk PPR from \cite{liu2024universal}, as their execution time is prohibitively high across all experimental settings. Additionally, we do not compare the communication cost and execution time with the randomized response technique, as its communication cost is significantly higher than our method and can be theoretically predicted. Moreover, its execution time remains consistently low since it only involves bit flipping.   

\subsection{Recommendation Systems}

We conduct our experiments using the MovieLens 32M dataset \cite{harper2015movielens}, which contains 32 million ratings for 87,585 movies from 200,948 users. For all experiments, we randomly select 1,000 users from this dataset and apply our algorithm to their rating lists. The vector size (representing the number of movies), \( n \), is 87,585, while the number of non-trivial elements (representing the number of ratings per user), \( d \), ranges from a few to 3,500.

\paragraph{Upload Cost and Execution Time}
First, in Figure~\ref{fig:recommender}, we present the communication cost required for users to upload their randomized response to the server, along with the execution time, using the default parameters: \( \varepsilon=1 \), \( \alpha=2 \), and \( \beta=2 \).

The results indicate that the upload cost scales linearly with the number of ratings, as expected, with a slope of approximately 2. The results also confirm that the upload cost of our algorithm is even lower than that of non-private publication of the adjacency vector when using the adjacency list format. While the execution time also increases with the number of ratings, the high variance reduces the strength of this correlation. We have verified that the high variance is not due to the variance of \( d_i \) in our algorithm but rather stems from the PPR. Despite the large variance, all 1,000 executions maintain a manageable execution time.

\begin{figure}[ht]
    \centering
    
    \includegraphics[width=0.4\columnwidth]{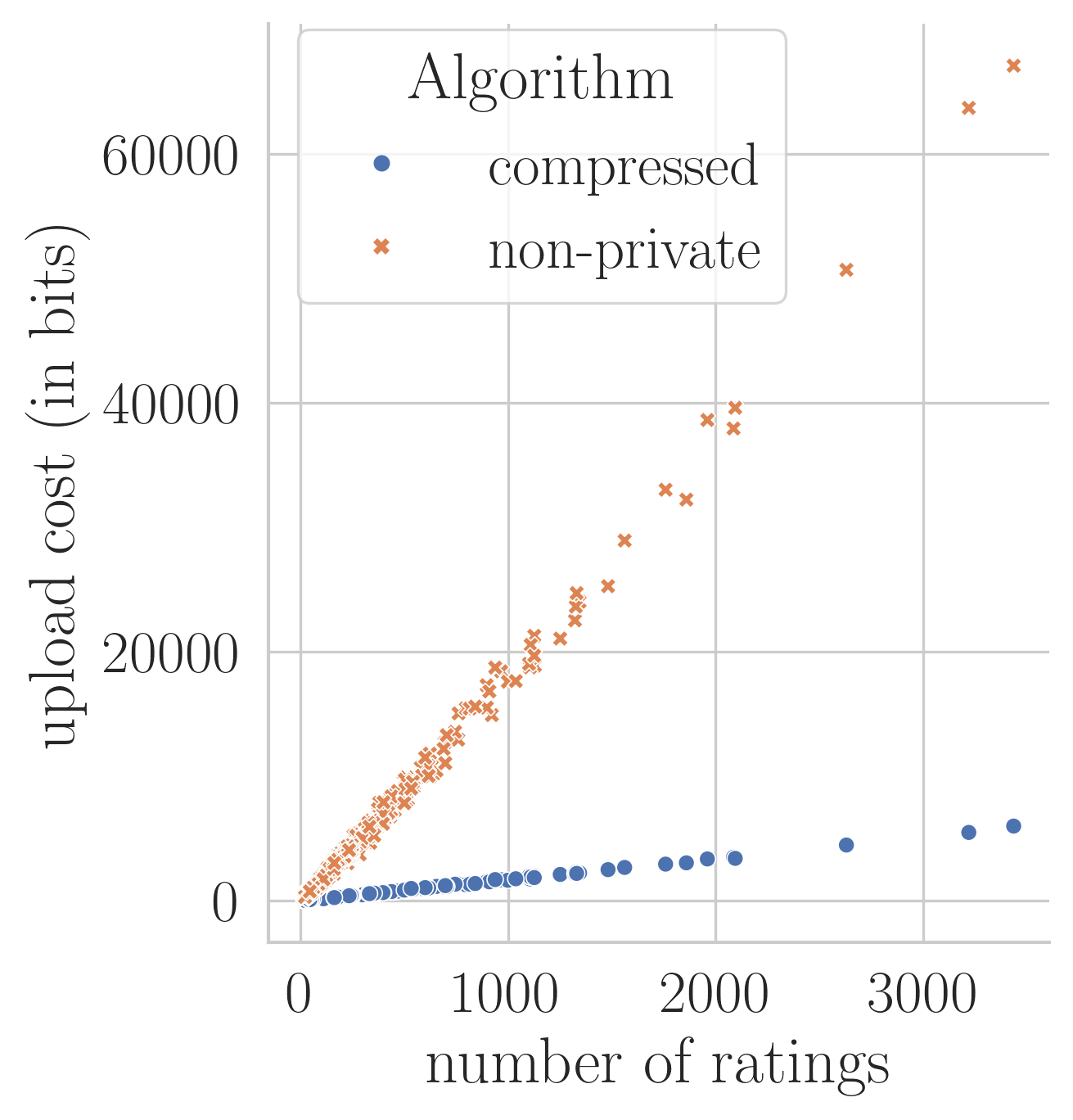}
    \includegraphics[width=0.4\columnwidth]{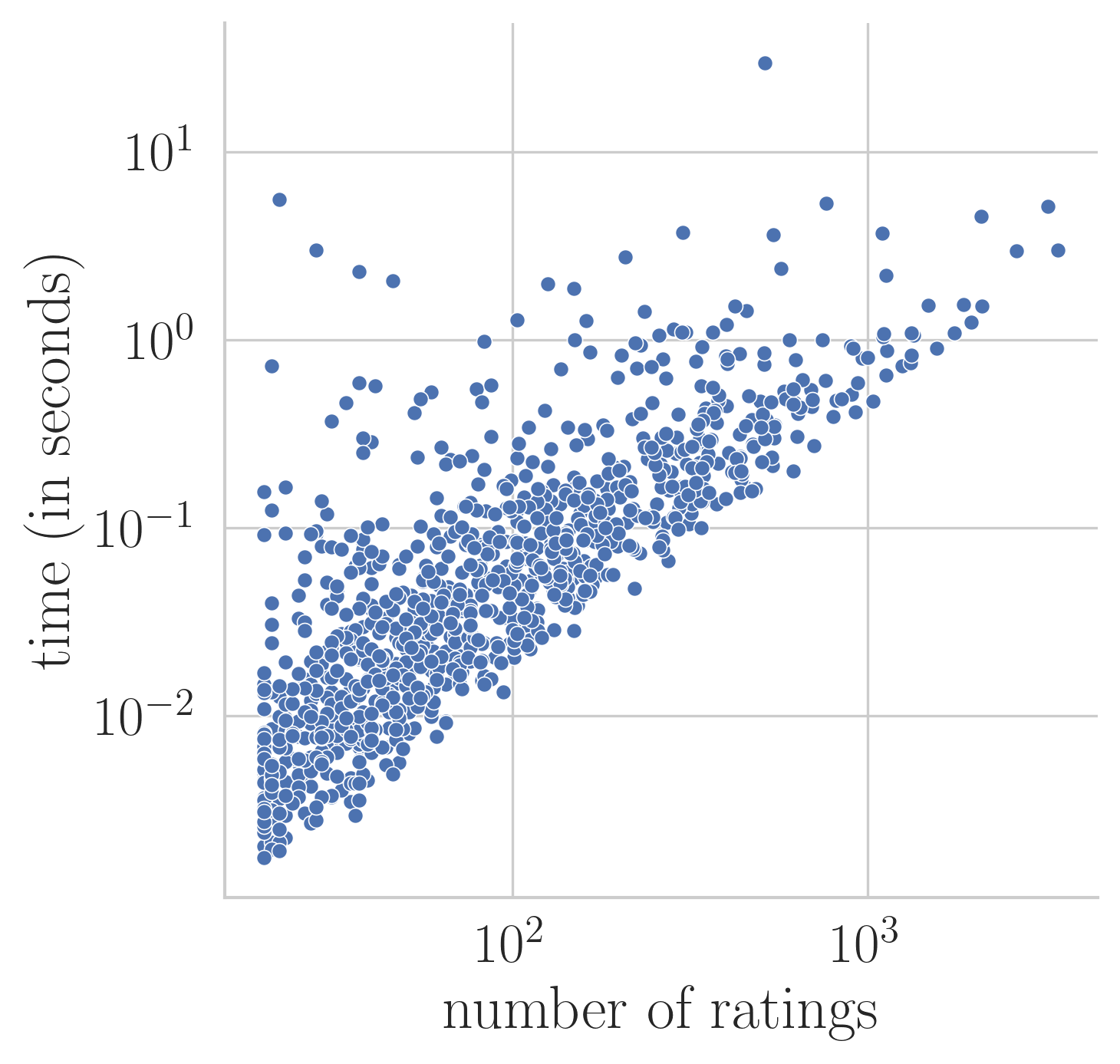}

    \caption{Communication cost and the execution time incurred by our algorithm for 1000 users of the Movie Lens dataset as a function of their number of movie rated}
    \Description{Figure 1. Fully described in the text.}
    \label{fig:recommender}
\end{figure}

\paragraph{Results on Different Privacy Budget $\varepsilon$}
In this experiment, all parameters remain at their default values except $\varepsilon$.  
The results of this experiment are presented in Figure~\ref{fig:recommender-epsilon-upload} for the upload cost and Figure~\ref{fig:recommender-epsilon-time} for the execution time. These results indicate that as the privacy budget increases, both the upload cost and execution time of the algorithm increase, while the variance of the execution time decreases.

\begin{figure}[ht]
    \centering
    \includegraphics[width=0.8\linewidth]{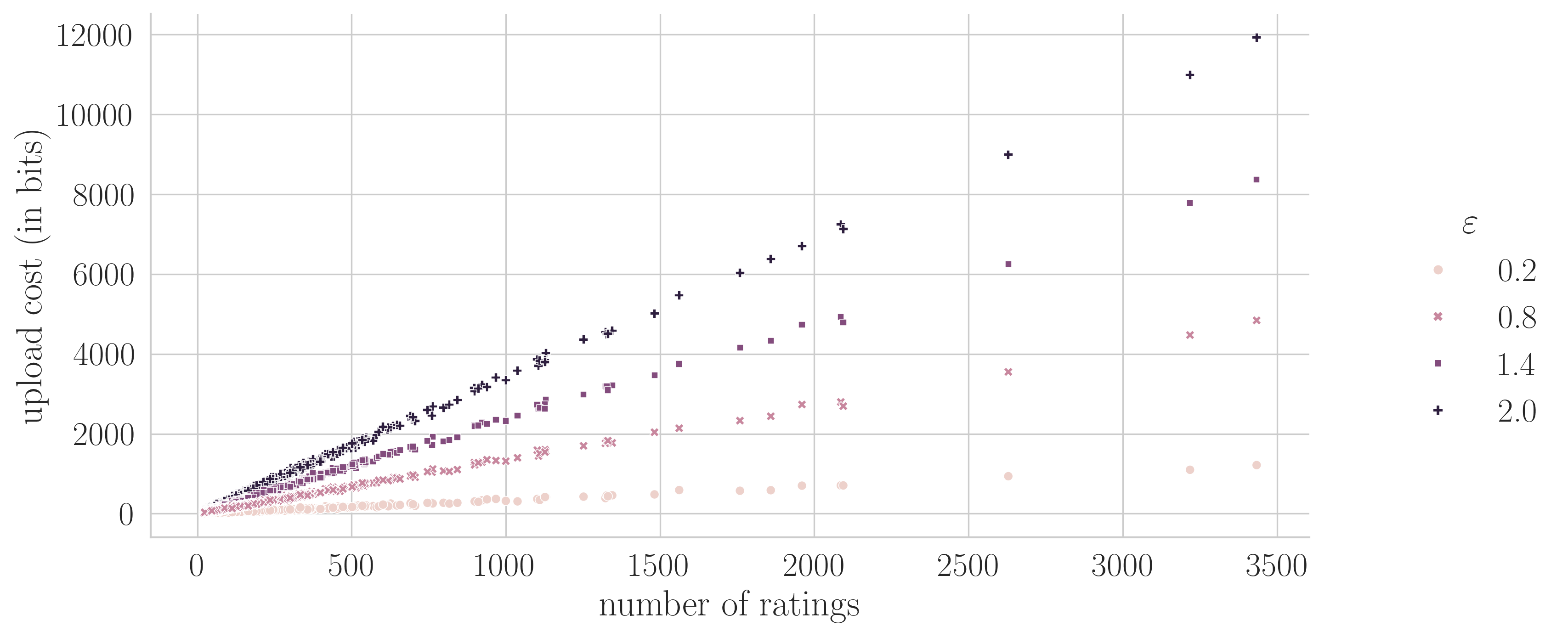}
    \caption{Communication cost of our algorithm for 1,000 users in the MovieLens dataset across different privacy budget values}
    \label{fig:recommender-epsilon-upload}
    \Description{Shows upload cost of the algorithm for 4 different values of privacy budget: 0.2, 0.8, 1.4 and 2.0.}
\end{figure}

\begin{figure}[ht]
    \centering
    \includegraphics[width=0.8\linewidth]{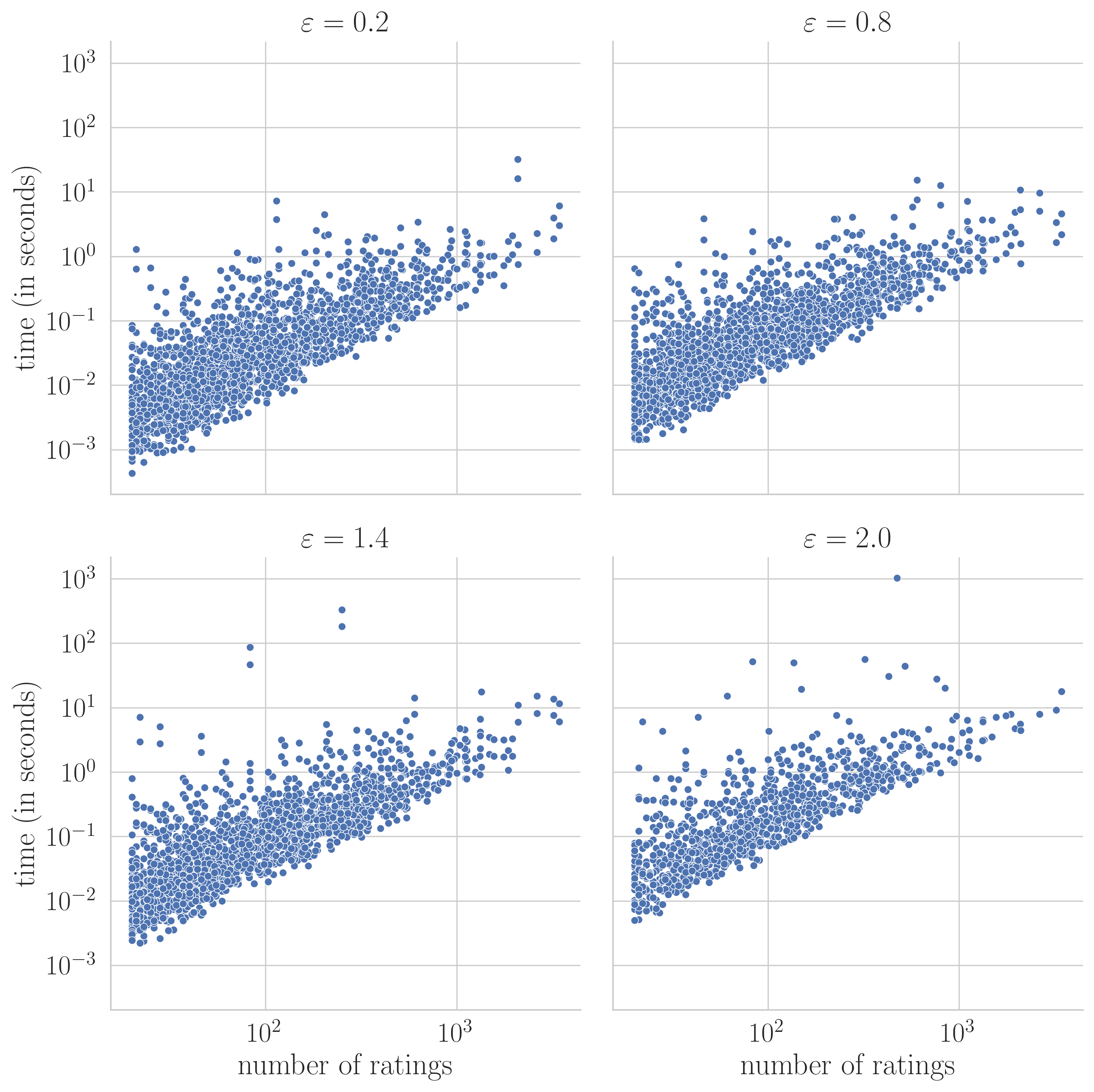}
    \caption{Execution time of our algorithm on 1000 users of the Movie Lens dataset for different values of the privacy budget}
    \label{fig:recommender-epsilon-time}
    \Description{Shows execution time of the algorithm for 4 different values of privacy budget: 0.2, 0.8, 1.4 and 2.0.}
\end{figure}

\paragraph{Results on Different Values of Parameter $\beta$}
Recall that the number of chunks, \( m \), is defined as \( \beta \epsilon d \). In the appendix, we present experimental results for different values of \( \beta \), which confirm that setting the default value to 2 is a reasonable choice.

\paragraph{Estimation of the Number of Common Items}
To prove the accuracy of our method, we also evaluate it on the task of computing the number of common items between 2 users. To this end, we randomly select pairs of users in MovieLens and estimate their number of items in common with classic randomized response and our algorithm, which is called compressed RR in the figure.

\begin{figure}[ht]
    \centering
    \includegraphics[width=0.75\linewidth]{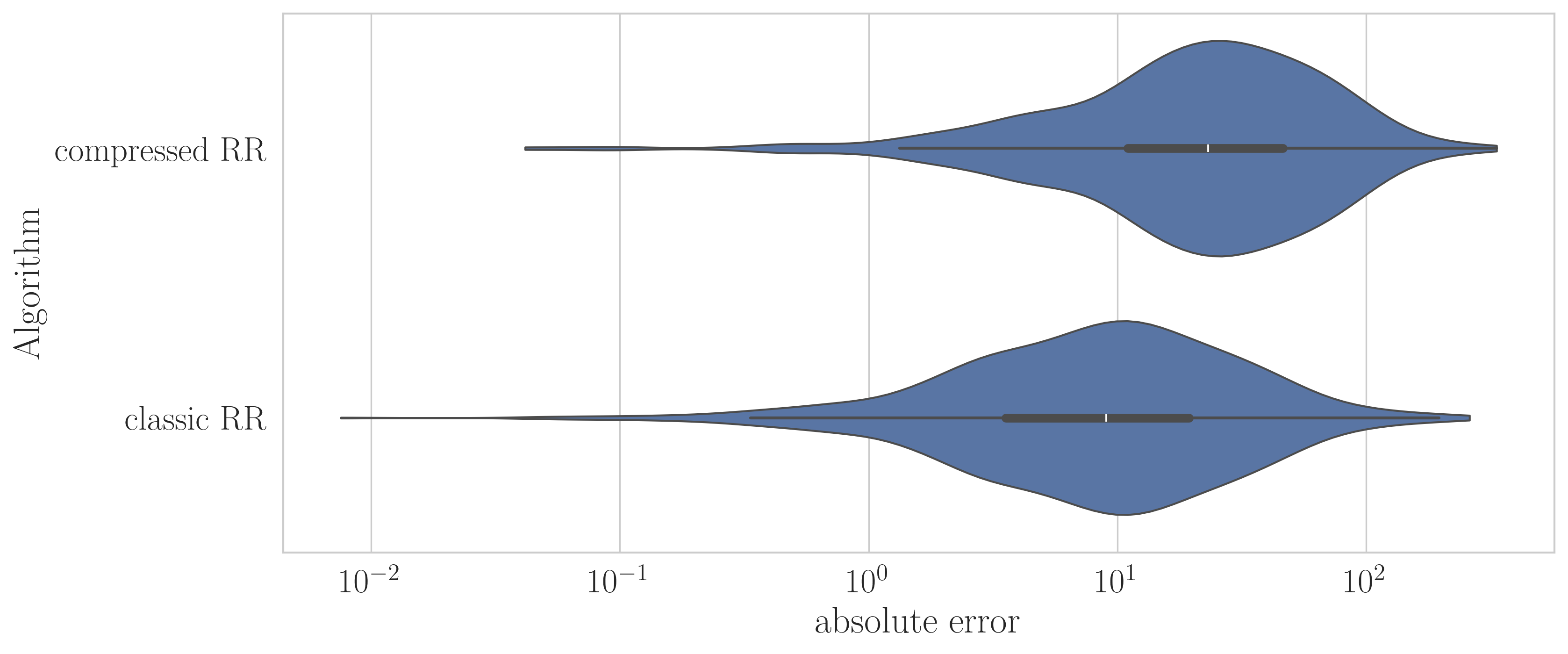}
    \caption{The absolute error in estimating the number of common neighbors across 1,000 user pairs in the MovieLens dataset}
    \label{fig:recommender-distance}
    \Description{Figure 4. Fully described in the text.}
\end{figure}

The results in Figure~\ref{fig:recommender-distance} show that accuracy experiences only a slight decline when using our algorithm, which is expected. Our method can achieve the same results as randomized response but with a larger privacy budget. Consequently, when the budget is fixed, the accuracy is slightly reduced.

\subsection{Genomic Data}

The second application we examine is the publication of SNPs by users. For our experiments, we use chromosome 22 data from the Phase 3 release of the 1000 Genomes Project \cite{10002015global}.

This dataset consists of 1,064,502 locations. While real SNP information is not available due to its sensitivity, we have probability values indicating the likelihood of a user having a variation from the most frequent nucleotide at each location. We use these probabilities to generate synthetic user data. All probabilities are greater than \(10^{-4}\).  
Our method is particularly beneficial for sparse datasets, so we exclude locations where variations are too frequent from our experiments. This exclusion does not pose a limitation, as the principle of parallel composition allows us to publish frequent variations using classical randomized response while applying our algorithm to the rarer ones—without splitting the privacy budget. As a result, we retain only variations with a frequency below \(10^{-2}\), leading to $n = 890,060$ remaining variations, which represents over 83\% of the dataset.

In our experiments, we generate 1,000 synthetic users and apply our algorithm to their SNP lists, with the number of variations ranging from 800 to 1,100. 

In Figure~\ref{fig:dna}, we plot the resulting upload cost and execution time as functions of the number of variations each user possesses. Since the number of variations falls within a narrower range, the distribution is more clearly visible compared to the recommendation system. Notably, the variance in execution time is too high to reveal a clear trend. However, the upload cost remains within a 10\% range of its average value, indicating stable performance. In all of the results, our algorithm exhibits efficient performances both in the upload cost and the execution time.

\begin{figure}[ht]
    \centering
    \includegraphics[width=0.4\columnwidth]{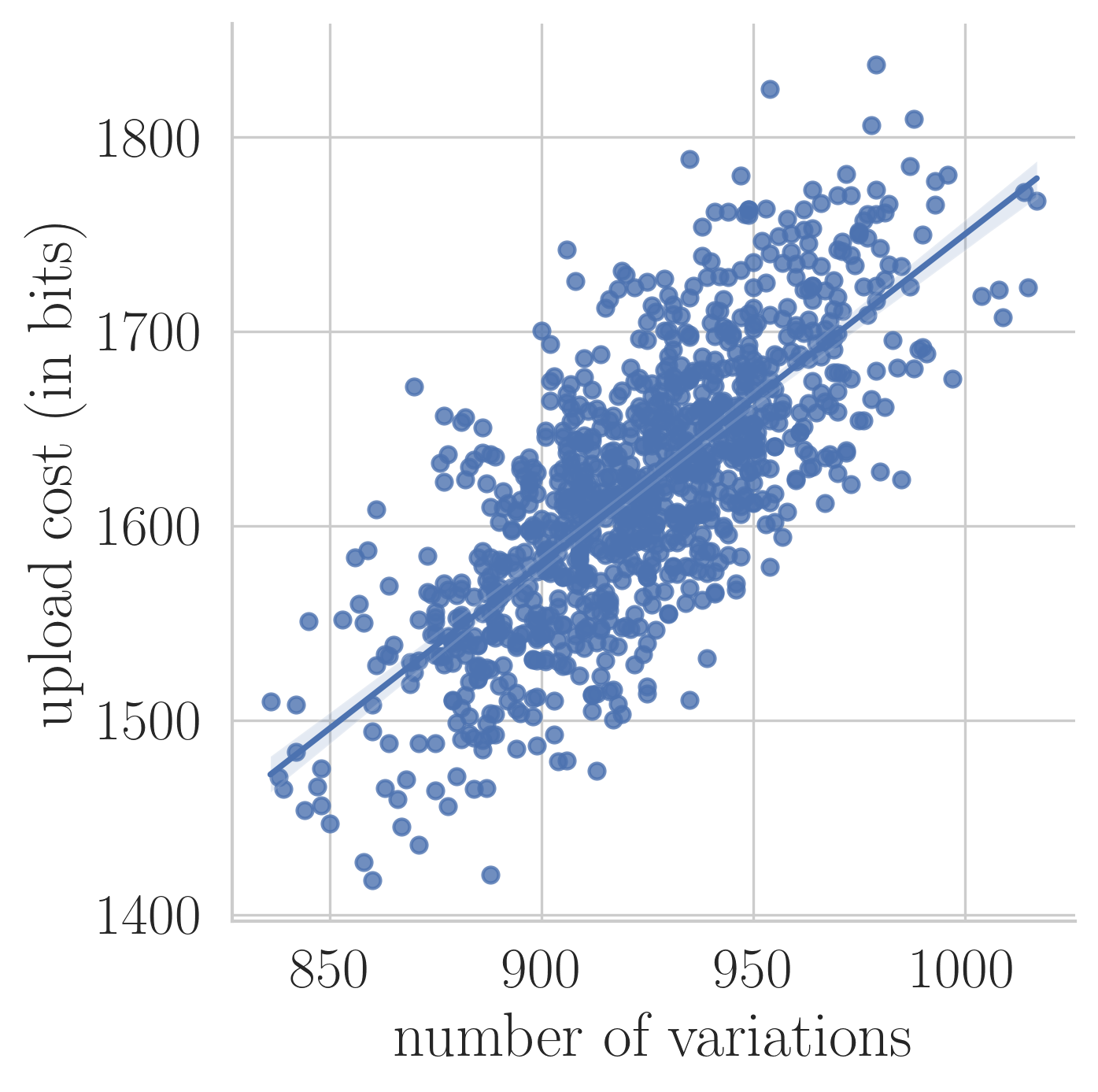}
    \includegraphics[width=0.4\columnwidth]{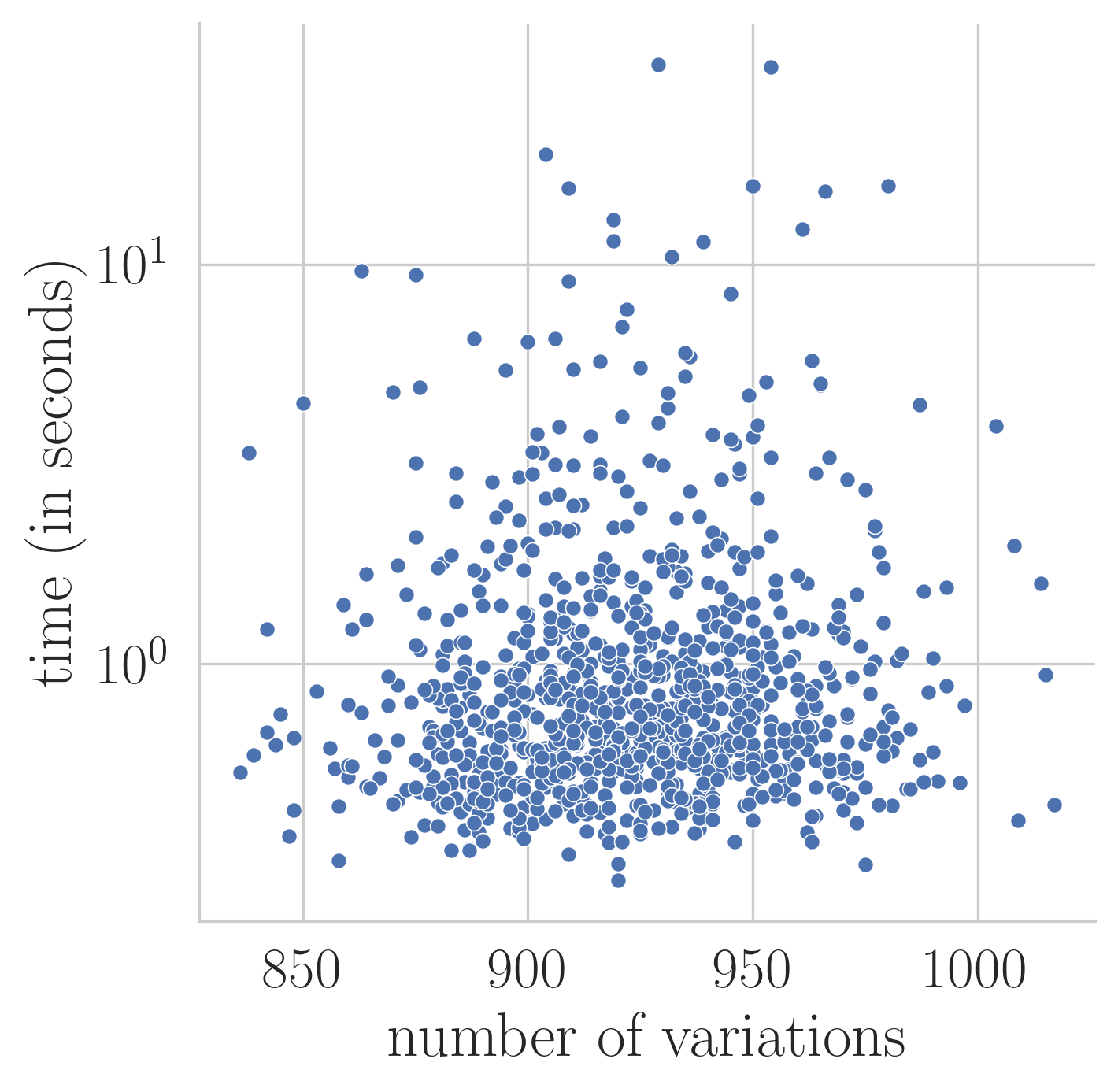}

    \caption{The communication cost and the execution time of our algorithm for 1000 randomly generated SNPs sequences as a function of the number of variations}
    \label{fig:dna}
    \Description{Figure 5. Fully described in the text.}
\end{figure}

\subsection{Social Networks}

\paragraph{Upload Cost and Execution Time} In Appendix, we present our upload cost and execution time on the Google+ dataset, which consists of 107,614 nodes. The results align closely with those observed in the recommendation system and genomic data experiments.

\paragraph{Triangle Counting}
For the experiment on triangle counting we chose to conduct them on the Wikipedia graph \cite{leskovec2010signed,leskovec2010predicting}. This graph contains $n = 7,115$ nodes and 103,689 edges. We are unable to conduct this experiment on the Google+ dataset because the randomized response technique requires excessive memory, making it infeasible to run the algorithm within our computational environment.

We use the two-step mechanisms described in \cite{imola2021locally} to privately estimate the number of triangles in the graph, with one key modification: we replace the classical randomized response with our algorithm.  
In this two-step mechanism, all users must download the randomized response results of every other user to their local storage. As a result, most of the communication cost comes from these download costs. Therefore, unlike other experiments where we compare upload costs, we focus on download costs in this evaluation.

For comparison, we evaluate our method against ARROne \cite{imola2022communication} and GroupRR with CSS \cite{hillebrand2023communication}. These algorithms include a sampling parameter that balances communication cost and accuracy. To demonstrate the full range of their capabilities, we compute their \(\ell_2\)-errors for various values of this parameter.

In contrast, our method does not require such a parameter, so we represent only a single point in the results, corresponding to the average download cost and the average \(\ell_2\)-error, defined as the square root of the sum of squared errors over 10 runs.  
The results are presented in Figure~\ref{fig:triangles}.

\begin{figure}[ht]
    \centering
    \includegraphics[width=0.9\linewidth]{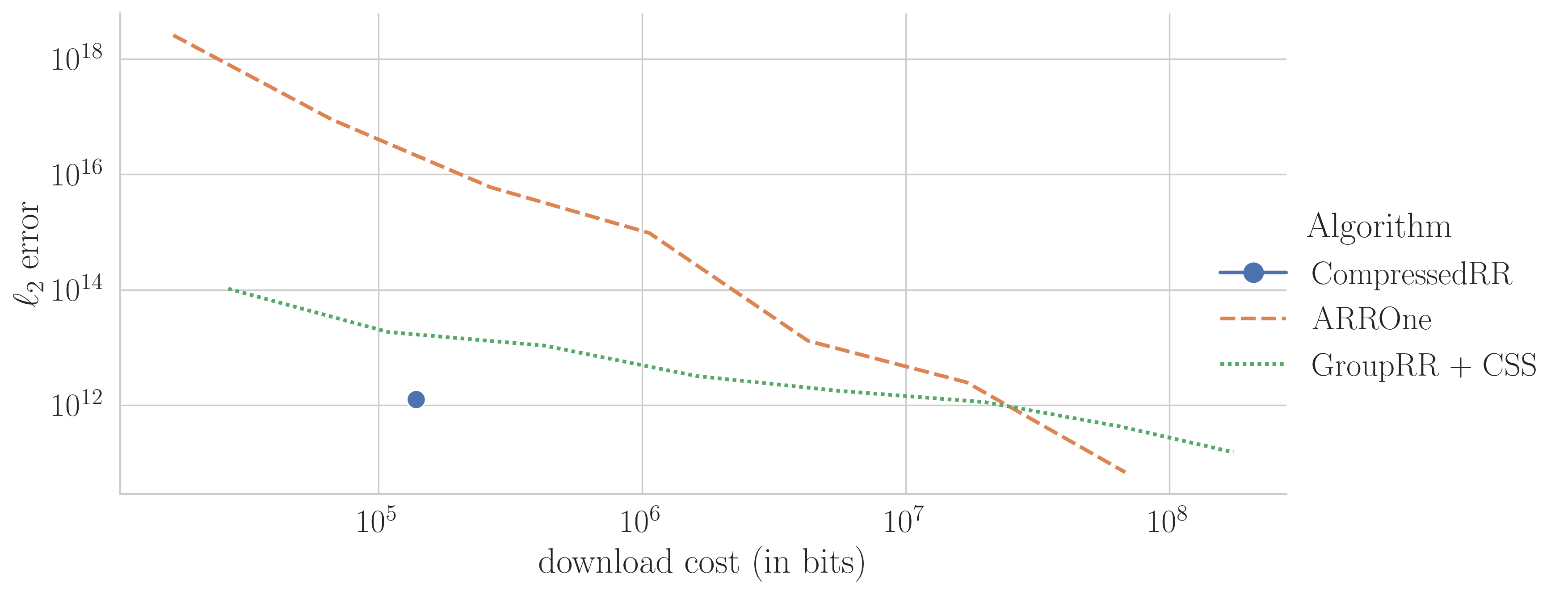}
    \caption{The average $\ell_2$ error in the estimation of the number of triangles in the Wiki graph for 3 different algorithms}
    \Description{Figure 7. Fully described in the text.}
    \label{fig:triangles}
\end{figure}

We observe that, for the same level of accuracy, our method reduces communication cost by a factor of more than 100. Furthermore, at the communication cost used by our method, the error of GroupRR is over 10 times higher, while ARR exhibits an error more than \(10^4\) times worse. We expect these improvements to be even more pronounced for larger or sparser graphs.

\section{Conclusion}

Randomized response is one of the most widely used algorithms for protecting users' sensitive information under metric and local differential privacy, with numerous potential applications. However, for sparse vectors, this method is inefficient in both communication and storage costs.  

Although several studies have proposed solutions to mitigate this issue \cite{imola2022communication,hillebrand2023communication}, we fully resolve it by achieving an even lower cost than non-private communication. This significantly expands the applicability of randomized response. Our algorithm is built on an information-theoretic approach inspired by PPR. While PPR is known to compress information published under differential privacy, its compression rate is approximately \( 1/\epsilon \), where \( \epsilon \) is the privacy budget. In contrast, our compression rate is \( n/(\epsilon d) \), where \( n \) is the vector size and \( d \) is the number of non-trivial values in the sparse vector. This rate is significantly higher than that of PPR.  

Our algorithm is the first to demonstrate how an information-theoretic approach can drastically reduce communication costs in differential privacy applications.

\begin{acks}
    Quentin Hillebrand is partially supported by KAKENHI Grant 20H05965, and by JST SPRING Grant Number JPMJSP2108. Vorapong Suppakitpaisarn is partially supported by KAKENHI Grant 21H05845 and 23H04377. Tetsuo Shibuya is partially supported by KAKENHI Grant 20H05967, 21H05052, and 23H03345.
\end{acks}

% \clearpage

\printbibliography

\clearpage

\section*{Appendix: Additional Experimental Results}

\subsection*{Results on Movie Lens Dataset for Different Values of $\beta$}

In this experiment, all parameters remain at their default values except for \( \beta \). 
The results are presented in Figure~\ref{fig:recommender-beta-upload} for the upload cost and Figure~\ref{fig:recommender-beta-time2} for the execution cost. The findings indicate that increasing \( \beta \) leads to a higher upload cost. For execution time, both its value and variance decrease as \( \beta \) increases. However, this trend is not observed between \( \beta=2 \) and \( \beta=4 \), which led us to select \( \beta=2 \) as the default value.

\begin{figure}[ht]
    \centering
    \includegraphics[width=\linewidth]{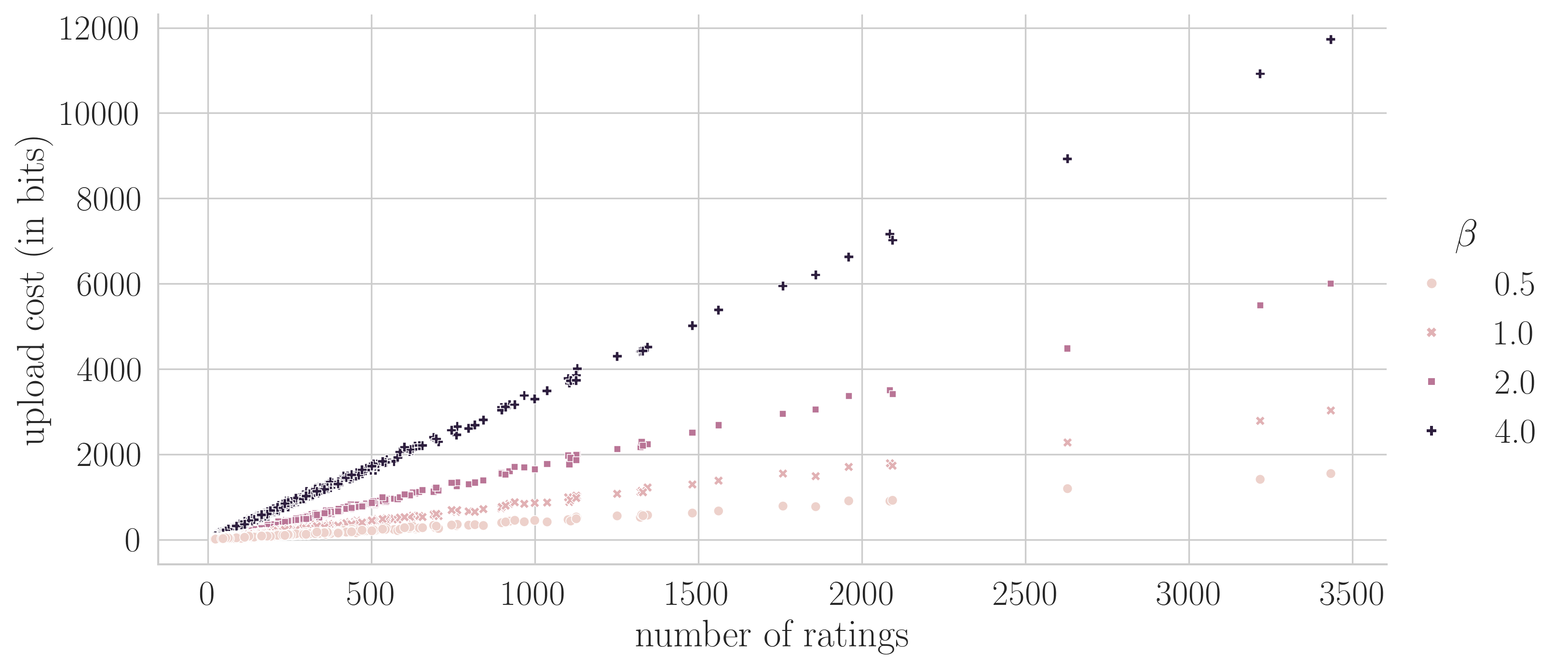}
    \caption{The communication cost of our algorithm on 1000 users of the Movie Lens dataset for different values of the parameter $\beta$.}
    \Description{Shows upload of the algorithm for 4 different values of the parameter beta: 0.5, 1, 2 and 4.}
    \label{fig:recommender-beta-upload}
\end{figure}

\begin{figure}[ht]
    \centering
    \includegraphics[width=\linewidth]{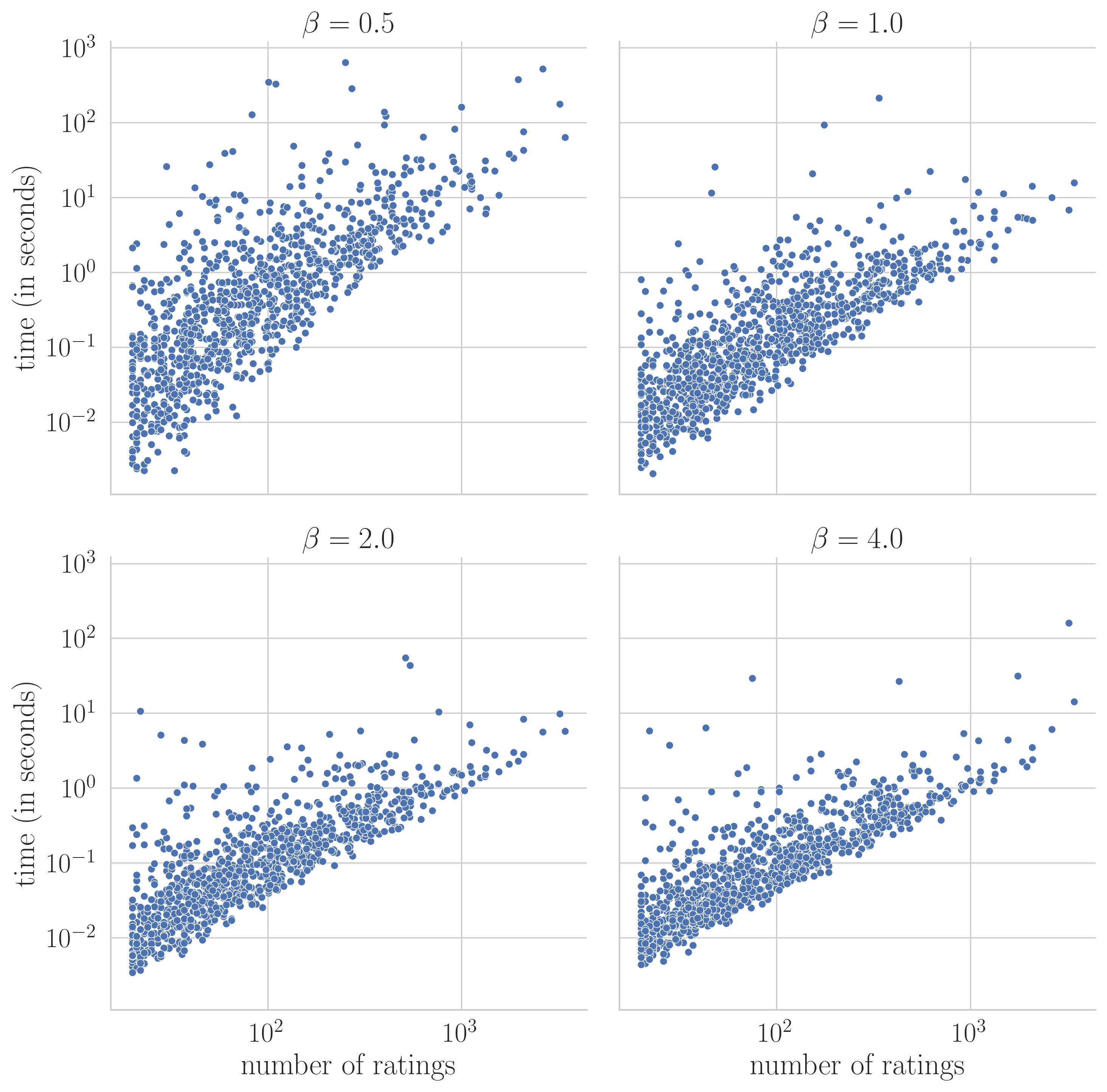}
    \caption{The execution time of our algorithm on 1000 users of the Movie Lens dataset for different values of the parameter $\beta$.}
    \Description{Shows execution time of the algorithm for 4 different values of the parameter beta: 0.5, 1, 2 and 4.}
    \label{fig:recommender-beta-time2}
\end{figure}

\subsection*{Experiments on Google+ Dataset}

We conducted our experiments on the Google+ dataset \cite{leskovec2012learning}, where nodes represent users and edges indicate connections between users within a circle. The resulting graph consists of $n = 107,614$ nodes and 13,673,453 edges. The degree $d$ ranges from 1 to 5,000.

To publish the complete adjacency matrix of an unordered graph, it is sufficient for each user to disclose only their connections with nodes having smaller indices than their own \cite{hillebrand2023communication}. Based on this principle, we applied our algorithm to the adjacency vector, retaining only the 1s corresponding to connections with lower-indexed nodes. Figure~\ref{fig:adjacency} presents the results for 1,000 randomly selected users from the graph.

\begin{figure}[ht]
    \centering
    \includegraphics[width=0.49\columnwidth]{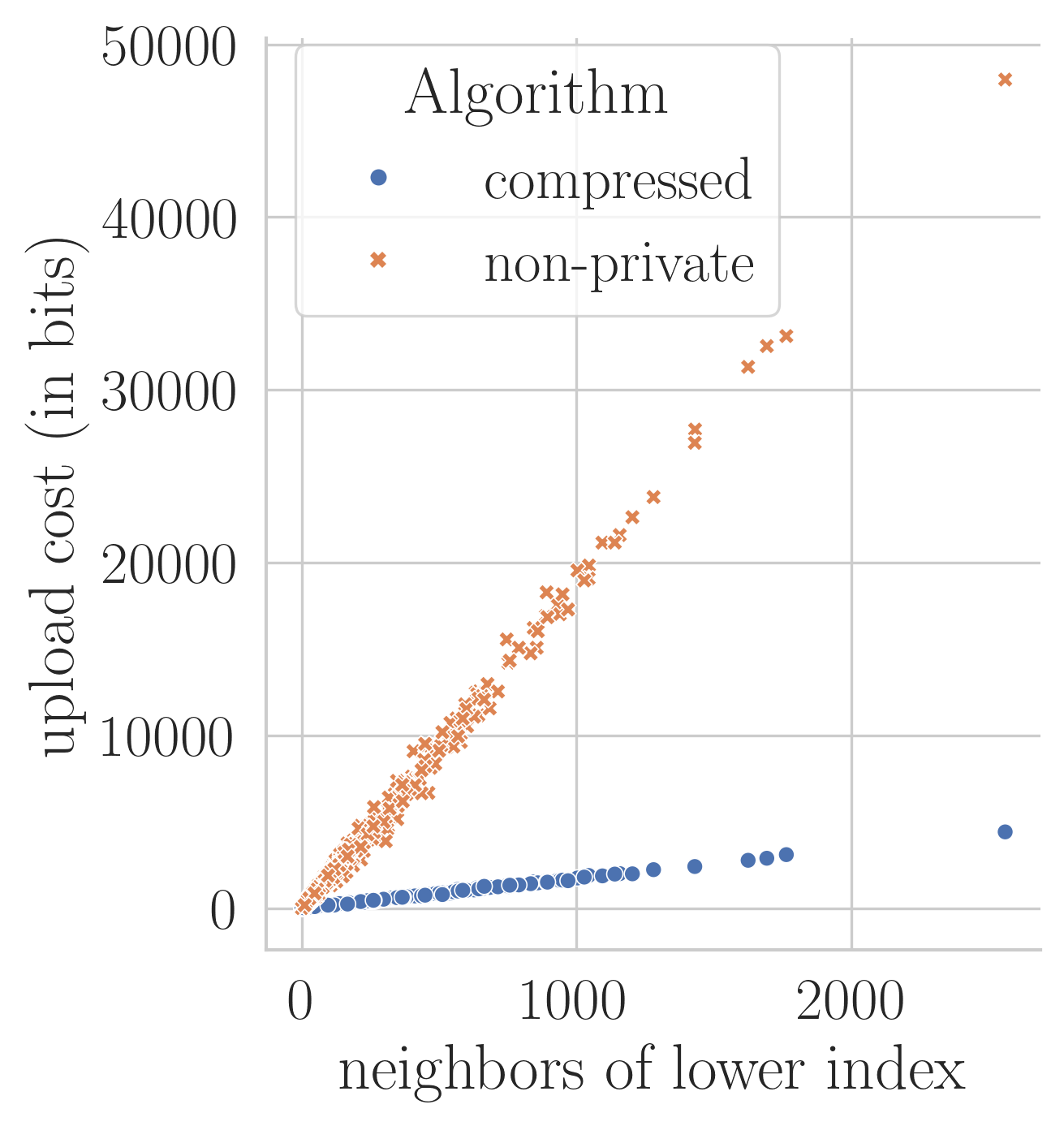}
    \includegraphics[width=0.49\columnwidth]{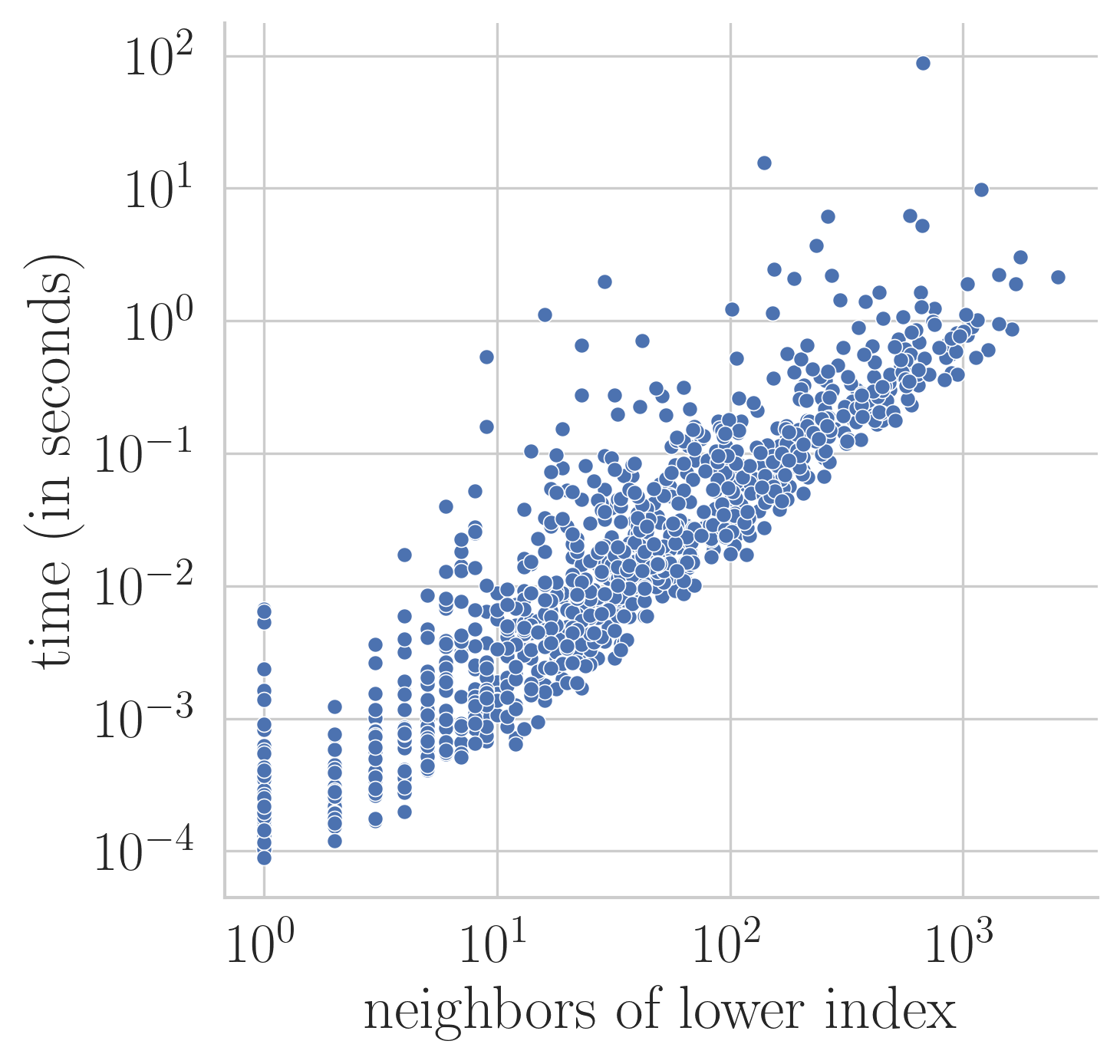}

    \caption{The communication cost and the execution time incurred by our algorithm for 1,000 users of the Google+ dataset as a function of their number of neighbors of smaller index}
    \Description{Figure 6. Fully described in the text.}
    \label{fig:adjacency}
\end{figure}

Similar to user-item interactions in recommendation systems, we observe that both the upload cost and execution time increase with the number of neighbors having smaller indices. Additionally, the variance in execution time is higher than in upload cost. However, we also note that both the upload cost and execution time remain low across all data points.

\end{document}